\newenvironment{breakablealgorithm}
  {
   \begin{center}
     \refstepcounter{algorithm}
     \hrule height.8pt depth0pt \kern2pt
     \renewcommand{\caption}[2][\relax]{
       {\raggedright\textbf{\ALG@name~\thealgorithm} ##2\par}%
       \ifx\relax##1\relax 
         \addcontentsline{loa}{algorithm}{\protect\numberline{\thealgorithm}##2}%
       \else 
         \addcontentsline{loa}{algorithm}{\protect\numberline{\thealgorithm}##1}%
       \fi
       \kern2pt\hrule\kern2pt
     }
  }{
     \kern2pt\hrule height.8pt depth0pt \kern2pt\relax
   \end{center}
  }
    \newcommand*{\algrule}[1][\algorithmicindent]{\makebox[#1][l]{\hspace*{.5em}\thealgruleextra\vrule height \thealgruleheight depth \thealgruledepth}}%
\newcommand*{\thealgruleextra}{}
\newcommand*{\thealgruleheight}{.75\baselineskip}
\newcommand*{\thealgruledepth}{.25\baselineskip}
\def\ALG@printindent{%
    \ifnum \theALG@nested>0
        \ifx\ALG@text\ALG@x@notext
        \else
            \unskip
            \addvspace{-1pt}
            \ALG@printindent@tempcnta=1
            \loop
                \algrule[\csname ALG@ind@\the\ALG@printindent@tempcnta\endcsname]%
                \advance \ALG@printindent@tempcnta 1
            \ifnum \ALG@printindent@tempcnta<\numexpr\theALG@nested+1\relax
            \repeat
        \fi
    \fi
    }%
\patchcmd{\ALG@doentity}{\noindent\hskip\ALG@tlm}{\ALG@printindent}{}{\errmessage{failed to patch}}
\newbox\statebox
\newcommand{\myState}[1]{%
    \setbox\statebox=\vbox{#1}%
    \edef\thealgruleheight{\dimexpr \the\ht\statebox+1pt\relax}%
    \edef\thealgruledepth{\dimexpr \the\dp\statebox+1pt\relax}%
    \ifdim\thealgruleheight<.75\baselineskip
        \def\thealgruleheight{\dimexpr .75\baselineskip+1pt\relax}%
    \fi
    \ifdim\thealgruledepth<.25\baselineskip
        \def\thealgruledepth{\dimexpr .25\baselineskip+1pt\relax}%
    \fi
    \State #1%
    \def\thealgruleheight{\dimexpr .75\baselineskip+1pt\relax}%
    \def\thealgruledepth{\dimexpr .25\baselineskip+1pt\relax}%
}
\newtheorem{definition}{Definition}
\newtheorem{theorem}{Theorem}
\newtheorem{remark}{Remark}%
\def\tsc#1{\csdef{#1}{\textsc{\lowercase{#1}}\xspace}}
\begin{document}
\let\WriteBookmarks\relax
\def\floatpagepagefraction{1}
\def\textpagefraction{.001}

\shorttitle{Program Dependence Net and On-demand Slicing for Property Verification of Concurrent System and Software}    

\shortauthors{Zhijun Ding, Shuo Li, Cheng Chen et al.}  

\title [mode = title]{Program Dependence Net and On-demand Slicing for Property Verification of Concurrent System and Software}  



%

\author[1]{Zhijun Ding}[orcid=0000-0003-2178-6201]
\ead{dingzj@tongji.edu.cn}
\credit{Conceptualization, Writing - Review \& Editing}
\affiliation[1]{organization={Tongji University},
            city={Shanghai},
            postcode={201804}, 
            country={China}}

\author[1]{Shuo Li}[orcid=0000-0002-1984-6271]
\ead{lishuo20062002@126.com}
\cormark[1]
\credit{Methodology, Formal analysis, Writing - Original Draft}
\cortext[1]{Corresponding author}

\author[1]{Cheng Chen}
\credit{Software}

\author[1]{Cong He}
\credit{Software}


\begin{abstract}
When checking concurrent software using a finite-state model, we face a formidable state explosion problem. One solution to this problem is dependence-based program slicing, whose use can effectively reduce verification time. It is orthogonal to other model-checking reduction techniques. However, when slicing concurrent programs for model checking, there are conversions between multiple irreplaceable models, and dependencies need to be found for variables irrelevant to the verified property, which results in redundant computation. To resolve this issue, we propose a Program Dependence Net (PDNet) based on Petri net theory. It is a unified model that combines a control-flow structure with dependencies to avoid conversions. For reduction, we present a PDNet slicing method to capture the relevant variables' dependencies when needed. PDNet in verifying linear temporal logic and its on-demand slicing can be used to significantly reduce computation cost. We implement a model-checking tool based on PDNet and its on-demand slicing, and validate the advantages of our proposed methods.
\end{abstract}



\begin{keywords}
 Petri nets \sep Formal Verification \sep Linear Temporal Logic \sep On-demand Slicing
\end{keywords}

\maketitle

\section{Introduction}\label{Sec:Int}

Verifying linear temporal logic (LTL) properties for concurrent systems and software is a challenging task. Finite-state model checking \cite{Clarke2018Handbook} is one of the most widely used methods for this purpose. However, the state explosion problem seriously hinders its practical application. To address this issue, researchers have developed reduction techniques like partial order \cite{Albert2018Constrained} from a state-space perspective. This technique can reduce possible orderings of independent statements, but it cannot guarantee that all orderings irrelevant to the verified property are completely reduced.

The program slicing theory suggests that a slicing criterion can be used to identify the essential parts needed to capture all relevant information for verified properties. Slicing methods with their focus on properties have been a widely used reduction technique in software verification \cite{Hatcliff1999A,Hatcliff2001Using,Chalupa2021Symbiotic}.
The existing evaluations \cite{Dwyer2006Evaluating,Chalupa2019Evaluation,Kumar2021Program} have confirmed that they can be effective in reducing the verification time and be orthogonal to other reduction techniques for model checking.
Traditionally, program slicing builds a program dependence graph (PDG) \cite{Ferrante1987The,Horwitz1990Interprocedural,Qi2017Precise} based on a control-flow graph (CFG).
In PDG, nodes represent statements and edges capture the relationships among them. These relationships fall into two categories: control-flow and data-flow dependencies. The former is determined by analyzing the conditions that dictate statement execution in CFG, while the latter is established by the definition-use relationship of variables on CFG's nodes. By applying transitive closure to PDG's edges starting from the nodes meeting a slicing criterion, one can identify the remaining nodes. These nodes correspond to the statements in the residual program, also known as a program slice.
Dependence-based program slicing is commonly employed as a preprocessing technique for model checking \cite{Hatcliff1999A,Hatcliff2000Slicing}. Control-flow automata (CFAs) are used to represent a slice, which are formal models that describe a control-flow structure, i.e., the order in which statements are executed, in many advanced tools \cite{Lowe2014CPAchecker}.
CFA's nodes signify control locations, with the edges denoting program operations. The executed operation is labeled on the edges between two nodes when a control location transitions from a source to destination. The reachable tree of CFA is utilized to determine state space for a model checking purpose.

The above-mentioned methods have the drawback of imposing significant computation cost. Firstly, the utilization of multiple models (such as PDG for slicing and CFA for model checking) at distinct stages necessitates conversions between them. The implementation of a unified model can obviate the necessity for such conversions and diminish computation cost from PDG to CFA.
Secondly, to calculate PDG, all data-flow dependencies must be thoroughly captured in advance. It is important to note that some variables' data-flow dependencies may not be relevant to the verified property in PDG.
Our proposal aims to tackle these two shortcomings by presenting a unified model that combines a control-flow structure with program dependencies. This model can determine the necessary data-flow dependencies as needed during slicing instead of capturing them beforehand. However, implementing this model on CFA poses a challenge.
CFA edges represent executed operations or statements, while PDG edges indicate program dependencies among operations or statements. Thus, both types of edges are crucial and irreplaceable, making it difficult to merge them directly and determine data-flow dependencies based on CFA solely.

As a concurrent system model, Petri nets (PNs)
\cite{Wolf2019Petri,Abubakar2023Adaptive,Cheng2023Optimal,Luo2021Inference,Luo2019Robust,Liu2017Deadlock,You2017Computation} can represent the control-flow structure of a concurrent program. An automaton-theoretic \cite{Kahlon2006An} approach can be applied to PNs for LTL model checking.
In the context of CFA, an edge is limited to a single source. Differently, a PN's transition's input place is not constrained to a single source. It allows for the representation of a statement's execution order condition and domination condition via distinct input places of the corresponding transition. PNs provide an appropriate means to amalgamate a control-flow structure and program dependencies.
Furthermore, the definition-use relationship of variables used to capture data-flow dependencies should be calculated by using variable operations and a control-flow structure.
Colored Petri nets (CPNs), which are a form of high-level Petri net, can represent variable operations with colored places and expressions, thereby enabling PNs to effectively capture data-flow dependencies as needed.

While current verification methods employing PNs or CPNs exist \cite{Gan2018An,Dietsch2021Verification}, they fail to fully combine a control-flow structure with control-flow dependencies.
As a solution, we propose a new model, \textit{Program Dependence Net} (PDNet), which leverages CPNs as a unified model to minimize the computation cost incurred in above-mentioned conversions.
Some methods of PN slicing \cite{Khan2018PNSlicing} are used to reduce concurrent models like workflows. However, they fail to take program dependencies into consideration. To address the mentioned issues, we propose on-demand slicing of PDNet to reduce computation cost associated with data-flow dependencies of irrelevant variables.
We aim to make the following novel contributions to the field of model checking:

1. We propose PDNet as a unified model for concurrent programs. It combines a control-flow structure with control-flow dependencies, thus avoiding the computation cost of model conversions required by traditional PDG-based program slicing methods.

2. We propose a new method to reduce PDNet by using a slicing technique that extracts criterion from LTL formulae. The key to our approach is capturing data-flow dependencies with an on-demand way based on PDNet, thus avoiding unnecessary computation.

3. We implement a model checking tool named \textit{DAMER}, standing for Dependence Analyser and Multi-threaded programs checkER. It automatically translates concurrent programs to PDNet without any manual intervention and reduces PDNet by on-demand slicing.

The next section gives a motivating examples.
Section \ref{Sec:PDNet} describes PDNet. Section \ref{Sec:Model} presents dependency modeling with PDNet. Section \ref{Sec:Slice} introduces on-demand PDNet slicing. Section \ref{Sec:Exp} discusses experimental results. Section \ref{Sec:Rel} briefs the related works. Section \ref{Sec:Con} concludes this article.

\section{Motivating Example}\label{Sec:Exm}

\begin{figure*}[!h]\centering
 \includegraphics[width=\textwidth]{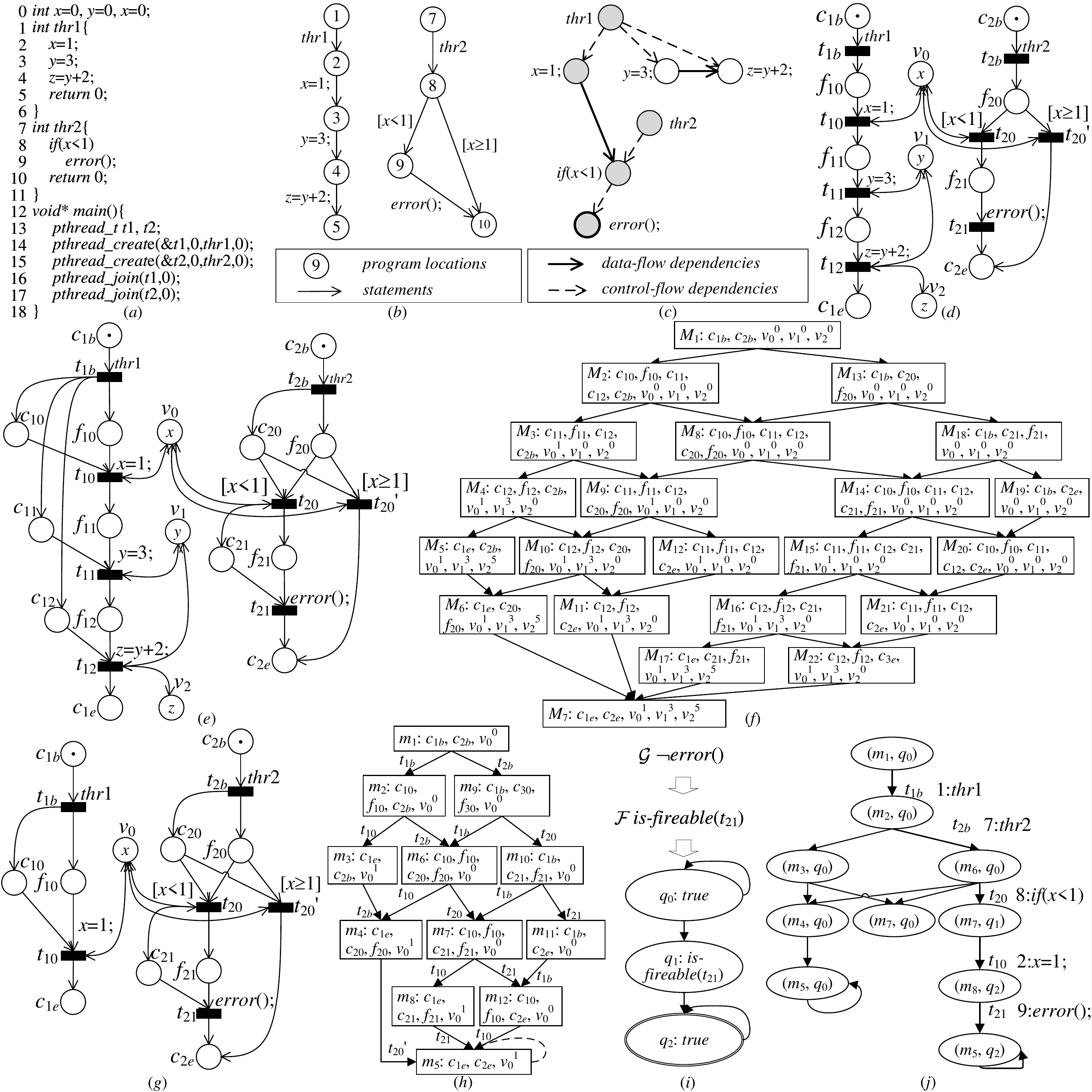}
 \caption{Motivating example ($a$) A concurrent program with an error location ($b$) The CFA of this program ($c$) The PDG of this program ($d$) A CPN converted from this program ($e$) A PDNet converted from this program ($f$) The state-space of the PDNet ($g$) The PDNet slice for $\mathcal{G}$ $\neg error()$ ($h$) The state space of the PDNet slice ($i$) The B\"{u}chi automaton for the negation of $\mathcal{G}$ $\neg error()$  ($j$) The explored product automaton of PDNet slice for $\mathcal{G}$ $\neg error()$}\label{Fg:Example}
\end{figure*}

To explore the challenge of verifying LTL in concurrent programs that utilize POSIX threads \cite{PThread2019}, we present an example program in Figure \ref{Fg:Example}($a$), which contains an error on Line $9$. The safety of the program is defined by the LTL-$_\mathcal{X}$ formula $\mathcal{G}$ $\neg error()$, which ensures that the function $error()$ is not executed in any state along any path.
CFA in Figure \ref{Fg:Example}($b$) shows the program's control-flow structure. The nodes are represented as circles with integers (matching those in Figure \ref{Fg:Example}($a$)), and the edges are represented by arrows indicating the statements.
For instance, the node labeled by $9$ corresponds to Line $9$ of the program, and the edge from $9$ to $10$ corresponds to $error()$.
PDG \cite{Hatcliff1999A} is in Figure \ref{Fg:Example}($c$).
For control-flow dependencies represented by dotted arrows, $x$$=$$1$, $y$$=$$3$ and $z$$=$$y$$+$$2$ all depend on the entry of $thr1$ $error()$ depends on the condition $if(x$$<$$1)$, while $if(x$$<$$1)$ depends on the entry of $thr2$.
For data-flow dependencies represented by bold arrows, $if(x$$<$$1)$ of $thr2$ depends on $x$$=$$1$ of $thr1$ because $x$$<$$1$ references $x$ defined in $x$$=$$1$ and they belong to different concurrently executing threads. $z$$=$$y$$+$$2$ depends on $y$$=$$3$ of $thr1$ because $z$$=$$y$$+$$2$ references $y$ defined in $y$$=$$3$ and $z$$=$$y$$+$$2$ is reachable from $y$$=$$3$.
The criterion for $\mathcal{G}$ $\neg error()$ is represented by bold nodes, and the remaining nodes are filled with light gray in Figure \ref{Fg:Example}($c$). 

Figure \ref{Fg:Example}($d$) gives the traditional CPN model \cite{Jensen2009Formal} for this program. For the sake of simplicity, the labels on the arcs are not shown. In this model, each transition signifies the execution of a statement when it occurs. The variables represented by the places are operated upon by the corresponding transition occurrences. For instance, after the firing of $t_{10}$ ($x$$=$$1$), $t_{11}$ ($y$$=$$3$) is enabled. After $t_{10}$ occurs, $x$ is assigned to $1$. However, it is important to note that this model only represents the control-flow structure and does not show program dependencies.
To combine the control-flow structure and control-flow dependencies, we establish specific places and arcs within the PDNet transition illustrated in Figure \ref{Fg:Example}($e$). Although the labels on the arcs have been excluded for clarity purposes, they are essential in distinguishing between a statement's execution order condition and domination condition.
The former represents the actual execution syntax and semantics in the control-flow structure, while the latter represents the control-flow dependencies.
For instance, $f_{11}$, $(t_{10},f_{11})$ and $(f_{11},t_{11})$ characterize the fact that $y$$=$$3$ executes after $x$$=$$1$.
$c_{11}$, $(t_{1b}, c_{11})$ and $(c_{11},t_{11})$ characterize the fact that $y$$=$$3$ depends on the entry of $thr1$.

The state space depicted in Figure \ref{Fg:Example}($f$) is represented by the reachability graph of the PDNet. The graph's nodes are labeled rectangles that correspond to place names, signifying markings. The edges are arrows that denote the fired transitions of the PDNet, with transition labels are excluded for clarity. The presence of tokens in places represents the markings, such as the marking $M_7$ where tokens are in the places $c_{1e}$, $c_{2e}$, $v_0$, $v_1$, and $v_2$.
Due to the firing of $t_{20}'$, $(M_6,M_7)$ takes place. The superscripts assigned to places $v_0$, $v_1$, and $v_2$ indicate the corresponding variable values in this marking. For example, the symbol $v_0^1$ in $M_7$ represents that the value of $x$ is $1$.

Moreover, superfluous computation cost may arise from data-flow dependencies pertaining to irrelevant variables. As illustrated in Figure \ref{Fg:Example}($c$), the PDG captures data-flow dependencies associated with $y$, which are not encompassed in any slice. Conversely, data-flow dependencies can only be noted when the relevant variable is verified to be unsliced. To avoid such expense, we propose PDNet slicing for capturing data-flow dependencies solely when required.
In Figure \ref{Fg:Example}($g$), the PDNet slice effectively captures the data-flow dependencies of $x$, leading to a reduction in several variables such as $c_{11}$, $f_{11}$, $t_{11}$, $c_{12}$, $f_{12}$, $t_{12}$, $v_{1}$, and $v_{2}$ as compared to traditional PN slicing methods. Consequently, $y=3$ and $z=y+2$ are sliced away. The state space of the PDNet slice, which is represented by the marking graph, is depicted in Figure \ref{Fg:Example}($h$). Transitions labeled on the edge signify the fired transition. For example, $m_7\stackrel{t_{10}}{\longrightarrow}m_8$ means $t_{10}$ fired under $m_7$ and $m_8$ is generated. As compared to the state space shown in Figure \ref{Fg:Example}($f$), PDNet slicing reduces $10$ states from ones which is significant. Additionally, a dotted arrow is added to an arc of $M_5$ pointing to itself because the LTL-$_\mathcal{X}$ model checking is based on the infinite path \cite{Kahlon2006An}.

In order to establish the safety property of the example program through the LTL-$_\mathcal{X}$ formula $\mathcal{G}$ $\neg error()$, we begin by converting $error()$ into $is\mbox{-} fireable(t_{21})$ of the PDNet. When $t_{21}$ is enabled in a marking, this proposition is considered to be $true$. For instance, in $m_{10}$ of Figure \ref{Fg:Example}($h$), $t_{21}$ is enabled and $is\mbox{-} fireable(t_{21})$ is $true$ in $m_{10}$. Next, we convert the formula to its negation form $\mathcal{F}$ $is\mbox{-} fireable(t_{21})$, which is then translated to a B\"{u}chi automaton \cite{Ding2023Enpac} depicted in Figure \ref{Fg:Example}($i$). The B\"{u}chi automaton features three states, where $true$ pertains to nodes $q_0$ and $q_2$ that can be synchronized with any reachable markings.
Given Figure \ref{Fg:Example}($h$), only the feasible markings that enable $t_{21}$ can synchronize with $q_1$, identified as $is\mbox{-} fireable(t_{21})$. In the on-the-fly exploration \cite{Ding2023Enpac}, the first counterexample in Figure \ref{Fg:Example}($j$) assesses $10$ product states only. One of these states, $(m_7,q_1)$, synchronizes marking $m_7$ in Figure \ref{Fg:Example}($h$) and state $q_1$ in Figure \ref{Fg:Example}($i$) because $t_{21}$ is enabled in $m_7$. Consequently, the example program violates the safety property $\mathcal{G}$ $\neg error()$ in Figure \ref{Fg:Example}($a$). The execution of statements $1$, $7$, $8$, $2$ and $9$, corresponding to the occurrence sequence $t_{1b}$, $t_{2b}$, $t_{20}$, $t_{10}$ and $t_{21}$ identified by the marking sequence $m_1,m_2,m_6,m_7,m_8,m_5,\cdots$ in Figure \ref{Fg:Example}($j$), serves as a counterexample.

\section{Program Dependence Net (PDNet)}\label{Sec:PDNet}
\subsection{PDNet}\label{Sub:PDNet}
In the following, $\mathbb{B}$ is the set of Boolean predicates with standard logic operations, $\mathbb{E}$ is a set of expressions, $Type[e]$ is the type of an expression $e$$\in$$\mathbb{E}$, i.e., the type of the values obtained when evaluating $e$, $Var(e)$ is the set of all variables in an expression $e$, $\mathbb{E}_V$ for a variable set $V$ is the set of expressions $e$$\in$$\mathbb{E}$ such that $Var(e)$$\subseteq$$V$, $Type[v]$ is the type of a variable $v$$\in$$V$, $\mathbb{O}$ is the set of constants, and $Type[o]$ is the type of constant $o$$\in$$\mathbb{O}$.

\begin{definition}[PDNet]\label{Def:PDNet}
PDNet is defined as a $9$-tuple $N$ $::=$ $(\Sigma$, $V$, $P$, $T$, $F$, $C$, $G$, $E$, $I)$, where:
	
1. $\Sigma$ is a finite non-empty set of types called color sets.
	
2. $V$ is a finite set of typed variables. $\forall v$$\in$$V\colon Type[v]$$\in$$\Sigma$.
	
3. $P=P_c\cup P_v\cup P_f$ is a finite set of places.  $P_c$ is a subset of control places, $P_v$ is a subset of variable places, and $P_f$ is a subset of execution places.
	
4. $T$ is a finite set of transitions and $T\cap P=\emptyset$.
	
5. $F\subseteq(P\times T)\cup(T\times P)$ is a finite set of directed arcs. $F=F_c\cup F_{rw}\cup F_f$. Concretely, $F_c\subseteq(P_c\times T)\cup(T\times P_c)$ is a subset of control arcs, $F_{rw}\subseteq(P_v\times T)\cup(T\times P_v)$ is a subset of read-write arcs, and $F_f\subseteq(P_f\times T)\cup(T\times P_f)$ is a subset of execution arcs.
	
6. $C\colon P$$\rightarrow$$\Sigma$ is a color set function that assigns a color set $C(p)$ belonging to the set of types $\Sigma$ to each place $p$.
	
7. $G\colon T$$\rightarrow$$\mathbb{E}_V$ is a guard function that assigns an expression $G(t)$ to each transition $t$. $\forall t$$\in$$T\colon Type[G(t)]\in BOOL)\wedge(Type[Var(G(t))]\subseteq\Sigma$.
	
8. $E\colon F$$\rightarrow$$\mathbb{E}_V$ is a function that assigns an arc expression $E(f)$ to each arc $f$. $\forall f$$\in$$F$$:$ $(Type[E(f)]=C(p(f))_{MS})\wedge(Type[Var(E(f))]\subseteq\Sigma)$, where $p(f)$ is the place connected to arc $f$.
	
9. $I\colon P$$\rightarrow$$\mathbb{E}_\emptyset$ is an initialization function that assigns an initialization expression $I(p)$ to each place $p$. $\forall p$$\in$$P\colon Type[I(p)]$$=$$C(p)_{MS})\wedge(Var(I(p))$$=$$\emptyset$.
\end{definition}

PDNet differs from CPNs in $P$ and $F$. Control places $P_c$ with their adjacent control arcs $F_c$ are used to model dominant relationships of control-flow dependencies, variable places $P_v$ with their adjacent read-write arcs $F_{rw}$ are used to model variables and their read/write relationships, and execution places $P_f$ with their adjacent execution arcs $F_f$ are used to model execution relationships in control-flow structures in concurrent programs. Other definitions and constraints of PDNet are consistent with CPN's.

As the example in Figure \ref{Fg:Example}($g$), $P_v$$=$$\{v_0\}$ where $v_0$ is a variable place corresponding to variable $x$, $P_c$$=$$\{c_{1b}$$, $
$c_{2b}$$,$ $c_{1e}$$, $ $c_{2e}$$, $ $c_{10}$$, $ $c_{20}$$, $ $c_{21}\}$, $P_f$$ = $$\{c_{1b}$$, $ $c_{2b}$$, $ $c_{1e}$$, $ $c_{2e}$$, $ $f_{10}$$, $ $f_{20}$$, $ $f_{21}\}$, and $t_{21}$ corresponds to the statement $error()$.
For a node $x \in P\cup T$, its preset $^\bullet$$x=\{y \vert (y, x) $$\in$$ F\}$ and its postset $x$$^\bullet = \{y \vert (x, y) $$\in$$ F\}$ are two subsets of $P\cup T$. For instance, $^\bullet$$t_{21} = \{f_{21}$$, $ $c_{21}\}$, $t_{21}$$^\bullet = \{c_{2e}\}$, and $^\bullet$$v_0 = v_0$$^\bullet$$=$$\{t_{10}$$,$ $t_{20} $$,$ $t_{20}'\}$ in Figure \ref{Fg:Example}($e$).
Some basic concepts of PDNet are defined next.

\begin{definition}\label{Def:Con}
Let $N$ be a PDNet.

1. $M\colon P \rightarrow \mathbb{E}_\emptyset$ is a marking function that assigns an expression $M(p)$ to each place $p$. $\forall p \in P\colon Type[M(p)]=C(p)_{MS}\wedge(Var(M(p))=\emptyset)$.
$M_0$ represents the initial marking, i.e., $\forall p \in P\colon M_0(p)=I(p)$.

2. $Var(t)\subseteq V$ is the variable set of transition $t$. It consists of the variables appearing in expression $G(t)$ and in arc expressions of all arcs connected to $t$.

3. $B\colon V \rightarrow \mathbb{O}$ is a binding function that assigns a constant value $B(v)$ to variable $v$. $B[t]$ presents the set of all bindings for transition $t$, that maps $v$$\in$$Var(t)$ to a constant value, and $b$$\in$$B[t]$ is a binding of $t$.

4. A binding element $(t,b)$ is a pair where $t$$\in$$T$ and $b$$\in$$B[t]$. $\widetilde{\mathbb{B}}(t)$ is a set of all binding elements of $t$.
\end{definition}

For convenience, a marking of $N$ is denoted by $M$ or $M$ with a subscript.
Then, guard and arc expressions are evaluated as follows.
Formally, $e\langle b\rangle$ represents the evaluation result of expression $e$ in binding $b$ by assigning a constant from $b$ to variable $v$$\in$$Var(e)$.
Thus, under a binding element $(t,b)$$\in$$\widetilde{\mathbb{B}}(t)$, $G(t)\langle b\rangle$ (or $E(f)\langle b\rangle$) represents the evaluation result of $G(t)$ (or $E(f)$), where $f$ is an arc connected to $t$.
For instance, $[x$$<$$1]$$\langle\{b(x)$$=$$1\}\rangle$$=$$flase$ for guard expression $G(t_{20})$$=$$[x$$<$$1]$ in Figure \ref{Fg:Example}($e$).

\begin{definition}[Enabling and Occurrence Rules of PDNet]\label{Def:Ena}
Let $N$ be a PDNet, $(t,b)$ a binding element, and $M$ a marking.
A binding element $(t,b)$ is enabled under $M$, denoted by $M[(t,b)\rangle$, if

1. $G(t)\langle b\rangle=true$, and

2. $\forall p\in^\bullet$$t\colon E(p,t)\langle b\rangle\leq M(p)$.

When $(t,b)$ is enabled under $M$, it can occur and lead to a new marking $M_1$ of $N$, denoted by $M[(t,b)\rangle M_1$, such that
$\forall$$p\in$$P\colon M_1(p)=M(p)-E(p,t)\langle b\rangle +E(t,p)\langle b\rangle$.
\end{definition}

For ease of expression, if binding $b$$\in$$B[t]$ enables binding element $(t,b)$$\in$$\widetilde{\mathbb{B}}(t)$ under a marking, we call $t$ enabled and can occur or this marking can fire $t$.
For instance, under marking $m_7$ in Figure \ref{Fg:Example}($f$), $t_{21}$ is enabled because $G(t_{21})$$=$$true$, $f_{21}$ and $c_{21}$ belonging to $^\bullet$$t_{21}$ are marked in $m_7$, and $E(f_{21},t_{21})$$\leq$$M(f_{21})$ and $E(c_{21},t_{21})$$\leq$$M(c_{21})$ are satisfied.
Given $b_{21}$$\in$$B[t_{21}]$, $m_7[(t_{21},b_{21})\rangle m_{12}$ where $c_{2e}$$\in$$t_{21}$$^\bullet$ is marked in $m_{12}$. That is, $m_7$ can fire $t_{21}$.

\begin{definition}[Occurrence Sequence of PDNet]\label{Def:Seq}
Let $N$ be a PDNet, $M_0$ be the initial marking, and $(t,b)$ be a binding element. An occurrence sequence $\omega$ of $N$ can be defined by the following inductive scheme:
1) $M_0[\varepsilon\rangle M_0$$(\varepsilon$ is an empty sequence$)$, and
2) $M_0[\omega$$\rangle$$M_1$$\wedge$$M_1[(t,b)\rangle M_2$$\colon$$M_0[$$\omega$$(t,b)\rangle M_2$.
An occurrence sequence $\omega$ of $N$ is maximal, if
1) $\omega$ is of infinite length (e.g., $(t_1,b_1)$, $(t_2,b_2)$, $\cdots$, $(t_n,b_n)$, $\cdots$), or
2) $M_0[\omega\rangle M_1$$\wedge$$\forall t$$\in$$T$, $\nexists(t,b)$$\in$$BE(t)$$:M_1[(t,b)\rangle$.
\end{definition}


A marking sequence w.r.t. $\omega$, denoted by $M$[$\omega$] or $M_1$, $M_2$, $\cdots$, and $M_n$, is generated by occurrence of all binding elements in $\omega$. For ease of expression, $\omega$ can be represented by $\langle t_{1},t_{2},\cdots,t_{n}\rangle$.
For instance, $\langle t_{1b},t_{2b},t_{20},t_{10},t_{21}\rangle$ is a maximal occurrence sequence of the PDNet in Figure \ref{Fg:Example}($e$).
And $\langle m_1,m_2,m_6,m_7,m_8,m_5\rangle$ is a marking sequence by firing $t_{1b}$, $t_{2b}$, $t_{20}$, $t_{10}$ and $t_{21}$ one after another.

\subsection{Linear Temporal Logic of PDNet}\label{Sub:LTL}
The LTL formalism, elucidated in \cite{Ding2023Enpac}, serves as a specification for delineating linear temporal properties, encompassing safety and liveness properties of Petri nets. Nevertheless, if slicing methods are employed to condense the state space, the resultant model may not encompass the entire sequence of the original model. Consequently, our methods are not aligned with operator $\mathcal{X}$ and can corroborate LTL-$_\mathcal{X}$ formulae.

\begin{definition}[Proposition and LTL-$_\mathcal{X}$ Formula of PDNet]\label{Def:LTL}
Let $N$ be a PDNet, $a$ be a proposition, $\mathbb{A}$ be a set of propositions, and $\psi$ be an LTL-$_\mathcal{X}$ formula. The syntax of propositions is defined as:

\begin{equation}
a ::= true \vert false \vert is\mbox{-} fireable(t) \vert token\mbox{-} value(p)\ r\ c
\end{equation}

where $t\in T, p\in P_v, c\in C(p)_{MS}$ is a constant, $r$$\in$$\{<,\leq,>,\geq,=\}$.

The proposition semantics is defined w.r.t a marking $M$:

\begin{equation}\centering
    	is\mbox{-} fireable(t) = \left\{
		\begin{aligned}
			&true    &if\ \exists b\colon M[(t,b)\rangle, \\
			&false   &otherwise.
		\end{aligned}
	\right.
\end{equation}

\begin{equation}
	token\mbox{-} value(p)\ r\ c = \left\{
		\begin{aligned}
			&true    &if\ M(p)\ r\ c, \\
			&false   &otherwise.
		\end{aligned}
	\right.
\end{equation}



The syntax of LTL-$_\mathcal{X}$ over $\mathbb{A}$ is defined as:

\begin{equation}
\psi ::= a \vert \neg\psi \vert \psi_1\wedge\psi_2 \vert \psi_1\vee\psi_2 \vert \psi_1\Rightarrow\psi_2 \vert \mathcal{F}\psi \vert \mathcal{G}\psi \vert  \psi_1\mathcal{U}\psi_2
\end{equation}


where $\neg$, $\wedge$, $\vee$ and $\Rightarrow$ are usual propositional connectives, $\mathcal{F}$, $\mathcal{G}$ and $\mathcal{U}$ are temporal operators, $\psi$, $\psi_1$ and $\psi_2$ are LTL-$_\mathcal{X}$ formulae.
\end{definition}

The condensed explanation of LTL-$_\mathcal{X}$ semantics under a marking sequence is similar to that of Petri nets as explained in \cite{Wolf2019Petri}. For example, if $\mathcal{G}$ $is\mbox{-} fireable(t)\Rightarrow \mathcal{F}$ $token\mbox{-} value(p)$$=$$0$, it means that whenever $t$ is enabled, the token of $p$ is zero in some subsequent states. Additionally, the B\"{u}chi automaton can be used to encode an LTL-$_\mathcal{X}$ formula for explicit-state model checking, as described in \cite{Ding2023Enpac} and illustrated in Figure \ref{Fg:Example}($i$) \cite{Kahlon2006An}.

For explicit-state model checking, the traditional approach has been automaton-theoretic. This involves exhaustively exploring all possible transition firings of a transition system (state space). For LTL, the problem is translated into an emptiness-checking problem. PDNet's analysis can also adopt the automaton-theoretic approach. Specifically, the marking of PDNet can synchronize with the states of B\"{u}chi automaton \cite{Ding2023Enpac}, which are known as B\"{u}chi states. To start, the initial product state is generated by the initial marking and the initial B\"{u}chi state. Then, an acceptable path starting from the initial product state is extended until reaching an acceptable product state \cite{Ding2023Enpac}. Thus, $N$$\models$$\psi$ holds if no acceptable sequence is reachable from the initial product state. For instance, there exists an explored counterexample in Figure \ref{Fg:Example}($j$), meaning that $N$$\models$$\psi$ does not hold.

\section{Dependency Modeling Based on PDNet}\label{Sec:Model}
\subsection{PDNet Modeling for Concurrent Program}\label{Sub:Sem}
\begin{table}[h]\centering
\caption{PDNet Transition}\label{Tab:PDNetTransition}
\begin{tabular}{ccc}
\toprule[1pt]
Action & Transition  & Operation \cite{Ding2023PDNet}   \\ \midrule
$ass$  & $assign$ transition                  & $\nu:=w$ \\ \midrule
$jum$  & $jump$ transition                    & \multirow{2}{*}{$jump$}   \\
$ret$  & $exit$ transition                    &    \\ \midrule
$tcd$  & \multirow{2}{*}{$branch$ transition} & $if(w)$$then$$(tau_1*)$$else$$(tau_2*)$ \\
$fcd$  &                                      & $while(w)$$do$$(tau*)$   \\ \midrule
$call$ & $call$ transition                    & \multirow{2}{*}{$calls$}   \\
$rets$ & $return$ transition                  &  \\ \midrule
$acq$  & $lock$ transition                    & $\langle lock, \ell\rangle$  \\ \midrule
$rel$  & $unlock$ transition                  & $\langle unlock, \ell\rangle$ \\ \midrule
$sig$  & $signal$ transition                  & $\langle signal, \gamma \rangle$   \\ \midrule
$wa_1$ & \multirow{3}{*}{$wait$ Transition}   & \multirow{3}{*}{$\langle wait, \gamma, \ell \rangle$}   \\
$wa_2$ &  \\
$wa_3$ &  & \\
\bottomrule[1pt]
\end{tabular}
\end{table}

We explain the syntax and semantics of concurrent programs with function calls \cite{Masud2021Semantic} and concurrency primitives \cite{PThread2019}, as outlined in \cite{Ding2023PDNet}. There are $13$ different actions, including $ass$, $jum$, $ret$, $tcd$, $fcd$, $call$, $rets$, $acq$, $rel$, $sig$, $wa_1$, $wa_2$ and $wa_3$ \cite{Ding2023PDNet}. Each of these actions corresponds to a specific transition as shown in Table \ref{Tab:PDNetTransition}, when modeling the control-flow structure of concurrent programs using PDNet.

\begin{figure*}[!ht]\centering
\includegraphics[width=\textwidth]{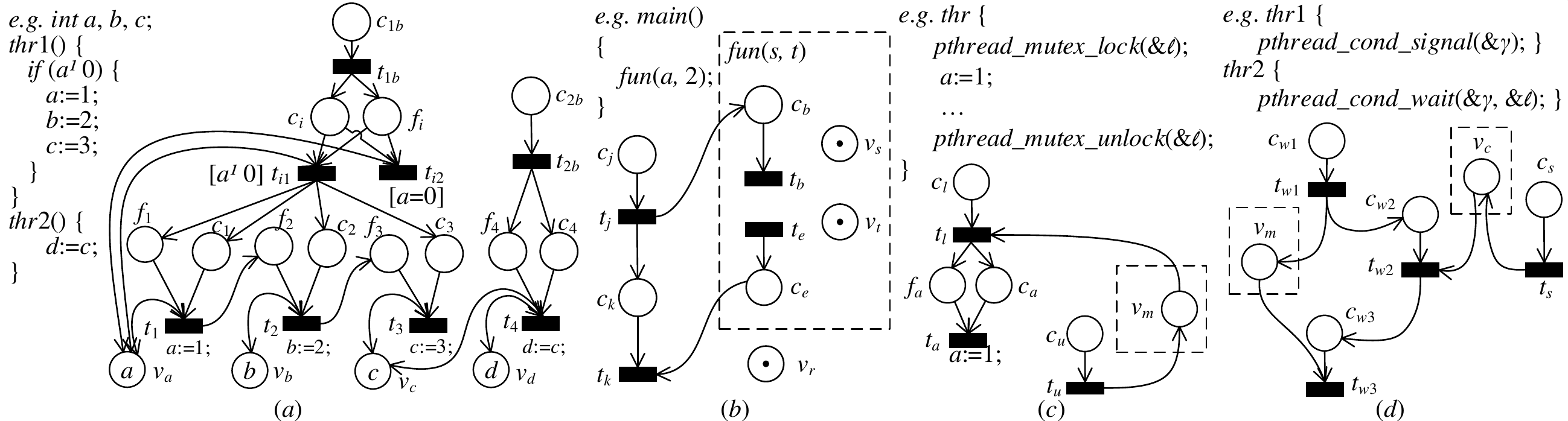}
  \caption{Example for Control-flow Structure ($a$) Example for $ass$, $jum$, $ret$, $tcd$ and $fcd$ ($b$) Example for $call$ and $rets$ ($c$) Example for $acq$ and $rel$ ($d$) Example for $sig$, $wa_1$, $wa_2$ and $wa_3$}\label{Fg:Control}
\end{figure*}

For instance, $t_1$ for $a$$:=$$1$, $t_2$ for $b$$:=$$2$, $t_3$ for $c$$:=$$3$ and $t_4$ for $d$$:=$$c$ in Figure \ref{Fg:Control}($a$) are $assign$ transitions, and $t_{i1}$ and $t_{i2}$ are $branch$ transitions for $if(a$$\neq$$0)$.
In Figure \ref{Fg:Control}($b$), $t_j$ is $call$ transition, and $t_k$ is $return$ transition.
In Figure \ref{Fg:Control}($c$), $t_l$ is $lock$ transition for \emph{pthread\_mutex\_lock}(\verb'&'$\ell$), and $t_u$ is $unlock$ transition for \emph{pthread\_mutex\_unlock}(\verb'&'$\ell$).
In Figure \ref{Fg:Control}($d$), $t_s$ is $signal$ transition for \emph{pthread\_cond\_signal}(\verb'&'$\gamma$), $t_{w1}$ for action $wa_1$, $t_{w2}$ for action $wa_2$ and $t_{w3}$ for action are $wait$ transitions from \emph{pthread\_cond\_wait}(\verb'&'$\gamma$, \verb'&'$\ell$).

Specially, $t_b$ is called $enter$ transition of the function $fun(s,t)$, $t_e$ is called $exit$ transition of $fun(s,t)$, ($t_j,c_b$) is called $enter$ arc, and ($c_e,t_k$) is called $exit$ arc in Figure \ref{Fg:Control}($b$).

\subsection{Control-flow Dependencies Based on PDNet}\label{Sub:ConDep}

Control-flow dependencies among statements were characterized as domination, as noted in \cite{Masud2021Semantic}. However, the definition of control-flow dependencies differs among such graphs as those in \cite{Ferrante1987The,Horwitz1990Interprocedural,Krinke2004Advanced,Qi2017Precise}. In this paper, we classify control-flow dependencies into four types: control, call, lock, and prior-occurrence ones. For a complete and cohesive representation of them, we utilize PDNet as a comprehensive way.

As mentioned above, $branch$ transition is modeled for actions $tcd$ or $fcd$, $lock$ ($unlock$) transition is modeled for $acq$ ($rel$), and $signal$ ($wait$) transition is modeled for $sig$ ($wa_1,wa_2,wa_3$) in Table \ref{Tab:PDNetTransition}. $enter$ ($exit$) transition is constructed for a function like $t_b$ ($t_e$) in Figure \ref{Fg:Control}($b$).
Next, we define the execution path, control scope and critical region of PDNet.

\begin{definition}[Execution Path of PDNet]\label{Def:EP}
Let $N$ be a PDNet, and $t_m$ and $t_n$ be the two different transitions of $N$. A sequence $\pi$ along the transitions and places of $N$ is denoted by ($t_1$, $p_1$, $t_2$, $p_2$, $\dots$, $t_{k-1}$, $p_{k-1}$, $t_k$), where $(t_i,p_{i})$ or $(p_i,t_{i+1})$ ($1$$\leq$$i$$<$$k$) is an arc of $N$. $\pi$ is an execution path from $t_m$ to $t_n$ if
$t_1$ is $t_m$, $t_k$ is $t_n$, and $k$$\geq$$1$. 

All execution paths from $t_m$ to $t_n$ constitute the execution path set, denoted by $\mathbb{P}(t_m,t_n)$.
The transition $t_n$ is reachable from $t_m$, denoted by $\mathbb{R}(t_m,t_n)$, if
$\mathbb{P}(t_m,t_n)$$\neq$$\emptyset$.
Particularly, $t_n$ is reachable from $t_m$ irrelevant to arc $(t_i,p_i)$, denoted by $\mathbb{R}(t_m,t_n)_{\vec{t_i p_i}}$, if $\forall\pi$$\in$$\mathbb{P}(t_m,t_n)$, $(t_i,p_i)$$\notin$$\pi$.
\end{definition}

\begin{definition}[Control Scope of PDNet]\label{Def:CS}
Let $N$ be a PDNet, $t_m$ be a $branch$ or $enter$ transition of $N$, and $t_n$ be a transition of $N$.
$t_n$ is in the control scope of $t_m$, denoted by $\tilde{\mathbb{C}}(t_m,t_n)$, if
$t_n$ is reachable from $t_m$, and there exists an execution path starting in $t_m$ such that it does not contain $t_n$.
\end{definition}

\begin{definition}[Critical Region of PDNet]\label{Def:CR}
Let $N$ be a PDNet, $t_m$ be a $lock$ transition of $N$ that can acquire a mutex $\ell$, and $t_n$ be a transition of $N$.
$t_n$ is in the critical region of $t_m$, denoted by $\check{\mathbb{C}}(t_m,t_n)$, if
$t_n$ is reachable from $t_m$, and there exists no $unlock$ transition that releases the same mutex $\ell$ in any execution path in $\mathbb{P}(t_m,t_n)$.
\end{definition}

In Figure \ref{Fg:Control}($a$), $\langle t_{i1}$, $f_1$, $t_1$, $f_2$, $t_2$, $f_3$, $t_3\rangle$) is an execution path. $t_3$ is reachable from $t_{i1}$, i.e., $\mathbb{R}(t_{i1},t_3)$. $t_1$, $t_2$ and $t_3$ are in the control scope of $t_{i1}$, i.e., $\tilde{\mathbb{C}}(t_{i1},t_1)$, $\tilde{\mathbb{C}}(t_{i1},t_2)$ and $\tilde{\mathbb{C}}(t_{i1},t_3)$.
In Figure \ref{Fg:Control}($c$), $t_a$ is in the critical region of $t_l$, i.e., $\check{\mathbb{C}}(t_l,t_a)$.
Next, we define four kinds of control-flow dependencies.

\begin{definition}[Control-flow Dependencies of PDNet]\label{Def:CFD}
For concurrent program $\mathcal{P}$, $N$ is the PDNet of $\mathcal{P}$, $t_m$ and $t_n$ are two transitions of $N$:

1. $t_m$ is control-dependent on $t_n$, denoted by $t_n$$\stackrel{co}{\longrightarrow}$$t_m$,
if
1) $t_n$ is a $branch$ transition or $enter$ transition,
2) $\tilde{\mathbb{C}}(t_n,t_m)$, and
3) there exists no other $branch$ transition $t_o$ in the control scope of $t_n$ such that $\tilde{\mathbb{C}}(t_o,t_m)$.

2. $t_m$ is call-dependent on $t_n$, denoted by $t_n$$\stackrel{ca}{\longrightarrow}$$t_m$,
if $t_m$ is an $enter$ transition of the called function, and
$t_n$ is the $call$ transition of the calling function, making the execution flow turn to $t_m$, or
$t_n$ is an $exit$ transition of a called function, and $t_m$ is the $return$ transition of a calling function, making the execution flow turn back to $t_m$.

3. $t_m$ is lock-dependent on $t_n$, denoted by $t_n$$\stackrel{lo}{\longrightarrow}$$t_m$,
if $t_n$ is a $lock$ transition acquiring $\ell$, and $\check{\mathbb{C}}(t_n,t_m)$, or $t_m$ and $t_n$ are all $lock$ transitions that acquire $\ell$. 

4. $t_m$ is prior-occurrence-dependent on $t_n$, denoted by $t_n$$\stackrel{po}{\longrightarrow}$$t_m$,
if
1) $t_m$ is a $wait$ transition waiting for a condition variable $\gamma$, and
2) $t_n$ is a $signal$ transition notifying on the same condition variable $\gamma$.
\end{definition}

Intuitively, we define the control dependence for the nearest $branch$ or $enter$ transition of PDNet. A $branch$ or $enter$ transition can dominate the execution of the following transitions that are not in the control scope of other transitions.
As mentioned above, the PDNet control-flow structure provides the control place interfaces (e.g., $c_{1}$, $c_{2}$ and $c_{3}$ in Figure \ref{Fg:Control}($a$)) to describe the control dependencies.
The control arcs between a $branch$ or $enter$ transition and control places are constructed for the control dependencies.

In Figure \ref{Fg:Control}($a$), $a$, $b$ and $c$ are global variables initialized to $1$ in this program.
The PDNet structures of the branching operation $if(a$$\neq$$0)$ and assignment operations $a$$:=$$1$, $b$$:=$$2$ and $c$$:=$$3$ are constructed by modeling these operations. 
Thus, the control arcs ($t_{i1}$, $c_1$), ($t_{i1}$, $c_2$) and ($t_{i1}$, $c_3$) are constructed to describe the control dependencies based on the Definition \ref{Def:CFD}.
That is, the control dependencies $t_{i1}$$\stackrel{co}{\longrightarrow}$$t_1$, $t_{i1}$$\stackrel{co}{\longrightarrow}$$t_2$ and $t_{i1}$$\stackrel{co}{\longrightarrow}$$t_3$ are represented explicitly.

Other control-flow dependencies have been described by modeling function call operations and POSIX thread operations.
For example, the call dependencies $t_j$$\stackrel{ca}{\longrightarrow}$$t_b$ and $t_e$$\stackrel{ca}{\longrightarrow}$$t_k$ are represented in Figure \ref{Fg:Control}($b$), the lock dependence $t_l$$\stackrel{lo}{\longrightarrow}$$t_a$ is represented in Figure \ref{Fg:Control}($c$), and the prior-occurrence dependence $t_{w2}$$\stackrel{po}{\longrightarrow}$$t_s$ is represented in Figure \ref{Fg:Control}($d$).

\subsection{Data-flow Dependencies Based on PDNet}\label{sub:DataDep}

Data-flow dependencies describe the reachable definition-use relation of the variables. We classify the data-flow dependencies into data \cite{Ferrante1987The} and interference ones \cite{Qi2017Precise}.
Next, we define a reference and definition set of PDNets.

\begin{definition}[Reference Set and Definition Set of PDNet]\label{Def:defref}
Let $N$ be a PDNet and $t$ be a transition of $N$. The reference set of $t$ $Ref(t) ::= \{p \vert\forall p \in ^\bullet$$t\cap P_v\colon E(p,t) = E(t,p)\}$.
The definition set of $t$ $Def(t) ::=
\{p \vert \forall p \in ^\bullet$$t\cap P_v\colon E(p,t) \neq E(t,p)\}$.
\end{definition}

\begin{definition}[Data-flow Dependencies of PDNet]\label{Def:DFD}
For concurrent program $\mathcal{P}$, $N$ is the PDNet of $\mathcal{P}$, $t_m$ and $t_n$ are two transitions of $N$, if there is a variable place $v$, such that

1. $t_m$ is data-dependent on $t_n$, denoted by $t_n$$\stackrel{D}{\longrightarrow}$$t_m$,
if
1) $\mathbb{R}(t_n,t_m)$, and
2) $v$$\in$$Ref(t_m)\wedge v$$\in$$Def(t_n)$,
3). $\exists \pi\in \mathbb{P}(t_n,t_m)$, there exists no other transition $t_a$$\in$$\pi$ such that $v$$\in$$Def(t_a)$.

2. $t_m$ is interference-dependent on $t_n$, denoted by $t_n$$\stackrel{I}{\longrightarrow}$$t_m$,
if
1) there exist execution places $f_m$$\in$$(^\bullet t_m$$\cap$$P_f)$ and $f_n$$\in$$(^\bullet t_n$$\cap$$P_f)$ such that $M(f_m)$$\neq$$M(f_n)$, and
2) $v$$\in$$Ref(t_m)\wedge v$$\in$$Def(t_n)$.
\end{definition}

In Definition \ref{Def:DFD}'s $2.1)$, $M(f_m)$$\neq$$M(f_n)$ means that the operations corresponding to $t_m$ and $t_n$ belong to different concurrently executing threads.
For instance, $t_{11}$$\stackrel{D}{\longrightarrow}$$t_{12}$ is represented in Figure \ref{Fg:Example}($e$).
$t_{10}$$\stackrel{I}{\longrightarrow}$$t_{20}$ is represented in Figure \ref{Fg:Example}($e$), and $t_{3}$$\stackrel{I}{\longrightarrow}$$t_4$ in Figure \ref{Fg:Control}($a$).
Through the use of read-write arcs and control-flow structures in PDNet, we can capture these data-flow dependencies when needed without adding additional arcs. When slicing, the read-write arcs in PDNet can be used to capture these dependencies on demand. It is worth noting that local variables exhibit only data dependencies between $t_m$ and $t_n$, whereas $t_m$ may be data or interference dependent on $t_n$ if $v$ represents a global variable. This approach can help reduce computation cost.

\section{On-demand PDNet Slicing for Reduction}\label{Sec:Slice}
\subsection{Slicing Criterion of PDNet}\label{Sub:Crit}
The concept behind PDNet slicing is to eliminate unnecessary parts that are not relevant to the verified property. This reduces both model size and reachable state space. The slicing criterion is determined by the program features specified in the verified LTL-$_\mathcal{X}$ property. As a result, we extract the relevant slicing criterion for an LTL-$_\mathcal{X}$ formula.

\begin{definition}[Slicing Criterion]\label{Def:Crit}
Let $N$ be a PDNet, $\psi$ a LTL-$_\mathcal{X}$ formula with propositions in Definition \ref{Def:LTL}, $\mathbb{A}$ the proposition set from $\psi$, and $Crit$ the slicing criterion w.r.t. $\psi$.
If $a$ is in the form of $is\mbox{-} fireable(t)$, $Crit(a)$ $::=$ \{$p\vert p$$\in$$^\bullet$$t$$\setminus$$P_f$\}.
If $a$ is in the form of $token\mbox{-} value(p_t)$ $r$ $c$, $Crit(a)$ $::=$ \{$p\vert\forall t$$\in$$^\bullet p_t\colon E(t,p_t)$$\neq$$E(p_t,t)$$\wedge$$p$$\in$$^\bullet t$$\setminus$$P_f$\}.
The slicing criterion w.r.t. $\psi$ is $Crit$ $::=$ \{$p\vert\forall a$$\in$$\mathbb{A}\colon p$$\in$$Crit(a)$\}.
\end{definition}

Intuitively, each proposition in the LTL-$_\mathcal{X}$ formula $\psi$ should extract the places from PDNet $N$ to constitute a slicing criterion w.r.t. $\psi$.
For every proposition $a$ in $\mathbb{A}$, its corresponding places are extracted based on Definition \ref{Def:Crit}.
If $a$ is in the form of $is\mbox{-} fireable(t)$, the input places of $t$ except execution places (i.e., $P_f$ in Definition \ref{Def:PDNet}) are extracted.
If $a$ is in the form of $token\mbox{-} value(p)$ $r$ $c$ where $p$ is a variable place, the transitions in $^\bullet$$p$ that can update the token of $p$ can be found by $E(t,p)$$\neq$$E(p,t)$.
Then, the places except execution places in $^\bullet$$(^\bullet p)$ are extracted.
For example, there is a proposition $is\mbox{-} fireable(t_{21})$ in $\mathcal{F}$ $is\mbox{-} fireable(t_{21})$ of the PDNet in Figure \ref{Fg:Example}($e$). Due to $^\bullet$$t_{21}$$=$$\{f_{21},c_{21}\}$, $c_{21}$ is extracted to $Crit$.

\subsection{On-demand PDNet Slicing Algorithm}\label{Sub:Sli}
As mentioned above, traditional program slicing methods should capture complete data-flow dependencies in advance when constructing PDG.
Differently, we propose a PDNet slicing method to capture data-flow dependencies in Definition \ref{Def:DFD} with an on-demand way.
Let $N$ be the PDNet of concurrent program $\mathcal{P}$, and $Crit$ be a slicing criterion of $N$. We use $\stackrel{d}{\longrightarrow}$ to represent the union of all dependencies of PDNet. We define the PDNet slice next.

\begin{definition}[PDNet Slice]\label{Def:Slice}
Let $N$ be a PDNet, $\psi$ be a LTL-$_\mathcal{X}$ formula, $Crit$ be the slicing criterion w.r.t. $\psi$, and $N'$ be the PDNet slice of $N$ w.r.t. $Crit$.
$N'$ $::$= \{$x\in P\cup T\vert\forall p\in Crit :$ $x\stackrel{d}{\longrightarrow}$$^*$$p$\}.
\end{definition}

In this case, symbol $*$ stands for the potential transitive relationships of $\stackrel{d}{\longrightarrow}$, while $\stackrel{d}{\longrightarrow}$$^*$ denotes the transitive closure of the dependencies of PDNet. This means that regardless of the dependencies for adding a place, the transitive closure is calculated based on the dependencies of $\stackrel{d}{\longrightarrow}$.
When control place $p$ is in $Crit$, we add the control places of transitions that affect $p$ through the constructed arcs that represent control-flow dependencies to $N'$.

As the example in Figure \ref{Fg:Example}($e$), $\{c_{21}\}$ is the slicing criterion for $\mathcal{F}$ $is\mbox{-} fireable(t_{21})$. Due to $t_{2b}$$\stackrel{co}{\longrightarrow}$$t_{20}$ and $t_{20}$$\stackrel{co}{\longrightarrow}$$t_{21}$ according to Definition \ref{Def:CFD}, $t_{2b}$$\stackrel{co}{\longrightarrow}$$^*$$t_{21}$ (i.e., $t_{2b}$ affects $t_{21}$ indirectly), and the control place $c_{2b}$ of $t_{2b}$ should be added to $P'$ of $N'$.
Differently, when a variable place is added to $P'$ of $N'$, its data-flow dependencies can be calculated through the control-flow structure and read-write arcs according to Definition \ref{Def:DFD}.
In Figure \ref{Fg:Example}($e$), when $t_{20}$ is added to $P'$, $v_0$ can be added to $P'$. Then, $t_{10}$$\stackrel{I}{\longrightarrow}$$t_{20}$ concerning the variable place $v_0$ can be captured, and the control place $c_{10}$ of $t_{10}$ is added to $P'$ of $N'$.

Based on the above insights, our on-demand PDNet slicing is realized via algorithm \ref{Alg:Slicing}.

\begin{breakablealgorithm}
\caption{On-demand PDNet Slicing Algorithm}\label{Alg:Slicing}
\begin{algorithmic}[1]
\Require A PDNet $N$ and the slicing criterion $Crit$ w.r.t. $\psi$;
\Ensure The PDNet Slice $N'$ with $P'$ and $T'$; 
    \State $P' := Crit$; /$\ast P'$ is the place set of $N'\ast$/
    \State $T' := \emptyset$; /$\ast T'$ is the transition set of $N'\ast$/
    \State $P_d := \emptyset$; /$\ast P_d$ is a processed place set$\ast$/
    \State \Call {PROPA}{$enter$};
           \Call {PROPA}{$exit$};
\Function{PROPA}{$Dir$} 
    \ForAll{$p$$\in$$(P'$$\cap$$P_c$$\setminus$$P_d)$} 
        \ForAll{$t$$\in$$(^\bullet$$p$$\setminus$$ T')$$\wedge$$(t,p)$$\in$$F_c$} 
            \ForAll{$p'$$\in$$(^\bullet$$t$$\setminus$$P')$$\wedge$$(p',t)$$\in$$F_c$}
                \State $P' := P'$$\cup$$\{p'\}$; 
            \EndFor 
        \EndFor 
        \ForAll{$t$$\in$$p$$^\bullet$$\wedge$$(p,t)$$\in$$F_c$} 
            \State $T' := T'$$\cup$$\{t\}$; 
            \State $P' := P'$$\cup$$(^\bullet$$t$$\cap$$P_f)$; 
            \ForAll{$p'$$\in$($^\bullet$$t$$\setminus$$P'$$\cap$$P_v)$$\wedge$$E(t,p')$$=$$E(p',t)$} 
                \If{ISGLOBAL($p'$)}
                    \State $P' := P'$$\cup$$\{p'\}$; 
                    \ForAll{$t'$$\in$$^\bullet$$p'$$\wedge$$E(t',p')$$\neq$$E(p',t')$} 
                        \If{INPA($t',t,p'$)$\vee$INCON($t',t,p'$)} /$\ast t'$$\stackrel{I}{\longrightarrow}$$t$ or   $t'$$\stackrel{D}{\longrightarrow}$$t\ast$/
                            \ForAll{$p''$$\in$$(^\bullet$$t'$$\setminus$$P')\cap P_c$} 
                                \State $P'$$:=$$P'$$\cup$$\{p''\}$; 
                            \EndFor
                        \EndIf 
                    \EndFor
                \EndIf 
                \If{ISLOCAL($p'$)}
                    \State $T_d :=$ FINDPRE($t,p',Dir$);
                    \If{$T_d$$\neq$$\emptyset$}
                     $P':=P'$$\cup$$\{p'\}$; 
                    \EndIf
                    \ForAll{$t'$$\in$$T_d$} /$\ast t'$$\stackrel{D}{\longrightarrow}$$t\ast$/
                        \ForAll{$p''$$\in$$(^\bullet$$t'$$\setminus$$P')$$\wedge$$(p'',t')$$\in$$F_c$}
                        \State $P'$$:=$$P'$$\cup$$\{p''\}$; 
                        \EndFor
                    \EndFor 
                \EndIf 
            \EndFor
        \EndFor 
        \State $P_d := P_d\cup \{p\}$; 
    \EndFor 
\EndFunction 
\end{algorithmic}
\end{breakablealgorithm}

Firstly, $P'$, $T'$, and $P_d$ are initialized (Lines 1-3). We refer to a place that do not belong to $P_d$ as unprocessed place. That is, $p$$\in$$P_d$ is called a processed place.
There are two calls of PROPA($Dir$) (Line 4) avoiding redundancy caused by multiple calls \cite{Horwitz1990Interprocedural}.
The difference between these two calls is the propagation direction from the call dependence in Definition \ref{Def:CFD}'s $2.1)$ and $2.2)$.

In function PROPA($Dir$), if there exists an unprocessed control place $p$ (Line 6), all nodes $x$ meeting $x$$\stackrel{d}{\longrightarrow}$$^*$$p$ should be captured. The key slicing steps to propagate potential transitive
relationships are as follows.

The first step is to propagate control-flow dependencies from $p$ (Lines 7-9).
A transition $t$$\in$$^\bullet$$p$ is propagated by control arc $(t,p)$ (Line 7). Here, the control arcs describe which statements dominate the current one corresponding to $p$.
According to the control-flow dependencies in Definition \ref{Def:CFD}, $t$$\stackrel{co}{\longrightarrow}$$t'$ or $t$$\stackrel{ca}{\longrightarrow}$$t'$ or $t$$\stackrel{lo}{\longrightarrow}$$t'$ or $t$$\stackrel{po}{\longrightarrow}$$t'$ where $t'$ is the transition of the PDNet structure for the operation where $p$ locates.
Thus, the control place $p'$ of $t$ is added to $P'$ (Lines 8-9).

The second step is to complete the PDNet structure where $p$ locates. Its output transition $t$$\in$$p^\bullet$ is added to $T'$ (Line 11) and the execution place in $^\bullet$$t$ is added to $P'$ (Line 12).

The third step is to propagate data-flow dependencies from $p$ (Lines 13-25).
If there is a variable place $p'$$\in$$Ref(t)$ (Line 13), the data-flow dependencies relevant to $p'$ can be captured based on Definition \ref{Def:DFD}.
As the analysis mentioned above, the interference dependencies are captured only for a global variable to reduce the computation cost.
Thus, if $p'$ corresponds to a global variable (Lines 14-19), $p'$ is added to $P'$ (Line 15). If there exists transition $t'$ such that $p'$$\in$$Def(t')$ (Line 16), function INPA($t',t,p'$) judges whether $t'$$\stackrel{D}{\longrightarrow}$$t$ according to Definition \ref{Def:DFD}'s $1.1)$ and $1.3)$, function INCON($t',t,p'$) judges whether $t'$$\stackrel{I}{\longrightarrow}$$t$ according to Definition \ref{Def:DFD}'s $2.1)$ (Line 17), and the control places of $t'$ are added to $P'$ (Lines 18-19).
If $p'$ corresponds to a local variable (Lines 20-25), function FINDPRE($t,p',Dir$) finds the transition set $T_d::=\{t' \vert p'\in Def(t')$ $\wedge$ $\mathbb{R}(t',t)_{\vec{Dir}}$ $\wedge$ $(\forall t''$$\in$$\mathbb{P}(t',t)$$\colon$$p'$$\notin$$Def(t''))\}$ according to Definition \ref{Def:DFD}'s $1.1)$ and $1.3)$ (Line 21). Here, parameter $Dir$ is $enter$ arc (e.g., $(t_j,c_b)$ in Figure \ref{Fg:Control}($b$)) or $exit$ arc ($(c_e,t_k)$ in Figure \ref{Fg:Control}($b$)).
The control places of $t'$ are added to $P'$ (Lines 23-25).

The final step is that control place $p$ is added to $P_d$ marked as processed (Line 26).

Moreover, $N'$ can be non-executable due to the incomplete execution orders concerning the control-flow structure. Hence, the slicing post-process algorithm is outlined in \cite{Ding2023PDNet}.
The complexity of Algorithm \ref{Alg:Slicing} is, in the best case, $O$($n$), where $n$ is the statement count meaning the number of statements. Its worst-case complexity is $O$($n^2$) when the algorithm cannot slice any statement.
The correctness is proved in \cite{Ding2023PDNet}.

\begin{figure*}[t]\centering
  \includegraphics[width=\textwidth]{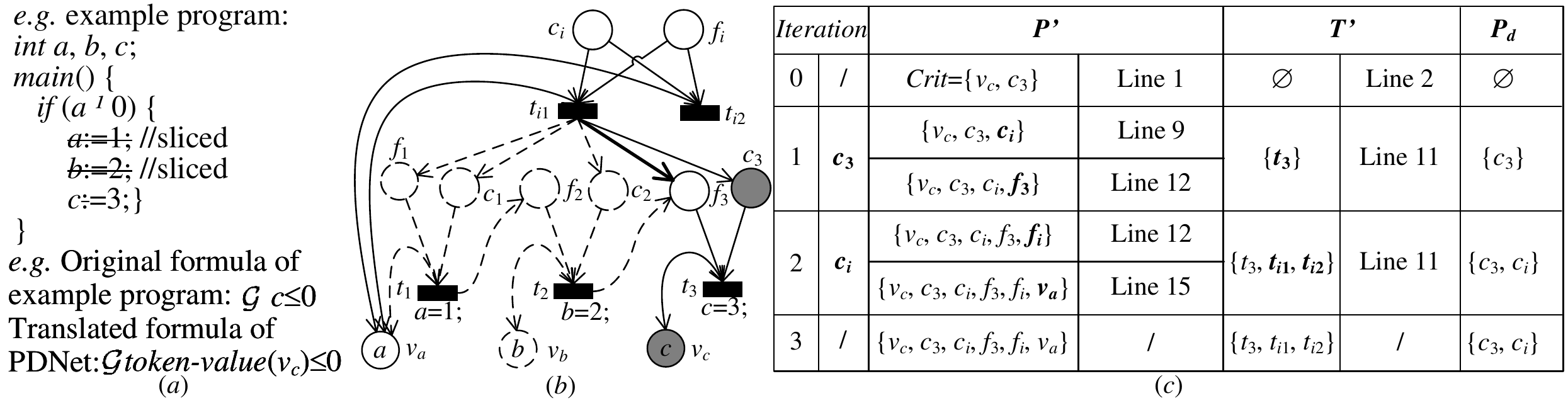}
  \caption{Example for slicing process ($a$) The property and the example program ($b$) The marked PDNet slice ($c$) The slicing process by Algorithm \ref{Alg:Slicing}}\label{Fg:Slice}
\end{figure*}

In Figure \ref{Fg:Slice}, we set $\mathcal{G}$ $token\mbox{-} value(v_c)\leq 0$ as the example property. $a$$:=$$1$ and $b$$:=$$2$ can be sliced for this property in Figure \ref{Fg:Slice}($a$). The PDNet of the example program is in Figure \ref{Fg:Slice}($b$).

Firstly, we extract $Crit$ through Definition \ref{Def:Crit}. There is only a proposition $token\mbox{-} value(v_c)\leq 0$, where variable place $v_c$ corresponds to variable $c$.
Due to $E(t_3,v_c)$$\neq$$E(v_c,t_3)$, $^\bullet$$t_3$$\setminus$$P_f$ is extracted, and $Crit$$=$$\{v_3,c_3\}$ is filled with dark gray.

Then, we update $P'$ and $T'$ via Algorithm \ref{Alg:Slicing}.
In the table of Figure \ref{Fg:Slice}($c$), $P'$ and $T'$ are updated in each iteration. Initially, $P'$$=$$\{v_c,c_3\}$, $T'$$=$$\emptyset$, and $P_d$$=$$\emptyset$ by Algorithm \ref{Alg:Slicing}.
\begin{itemize}
    \item In the first iteration, $c_3$ is selected as the control place by Algorithm \ref{Alg:Slicing}. Control place $c_i$ is added to $P'$ according to control dependence in Definition \ref{Def:CFD}. Transition $t_3$ is added to $T'$, and $f_3$ is added to $P'$ by Algorithm \ref{Alg:Slicing}. $c_3$ is marked in $P_d$ as a processed place by Algorithm \ref{Alg:Slicing}.
    \item In the second iteration, $c_i$ is selected in $P'$$\cap$$P_c$$\setminus$$P_d$. $t_{i1}$ ($t_{i2}$ with the same process) is added to $T'$ and execution place $f_i$ is added to $P'$ by Algorithm \ref{Alg:Slicing}. For data-flow dependencies in Definition \ref{Def:DFD}, variable place $v_a$ is selected due to $E(t_{i1},v_a)$$=$$E(v_a,t_{i1})$. $a$ corresponding to $v_a$ is a global variable, and $v_a$ is added to $P'$ by Algorithm \ref{Alg:Slicing}. Then, transition $t_a$ is selected due to $E(t_{a},v_a)$$\neq$$ E(v_a,t_{a})$. But there exists no execution path by function INPA in Algorithm \ref{Alg:Slicing}, and no control place is added to $P'$. $c_i$ is marked in $P_d$ as a processed place.
    \item Finally, no control place is selected based on $P'$ and $P_d$.
\end{itemize}

The dotted places and dotted arcs, as well as $t_1$ and $t_2$, are removed in Figure \ref{Fg:Slice}($b$).
However, there is no transition $t$ in $^\bullet f_3$ such that $(t,f_3)$ is an execution arc. Thus, a new execution arc $(t_{i1},f_3)$ expressed as a bold arrow is constructed based on the post-process.

\section{Experimental Evaluation}\label{Sec:Exp}
\subsection{Tool Implementation and Benchmarks}\label{Sub:Imp}
We implement a model checking tool called \textit{DAMER} (Dependence Analyser and Multi-threaded programs checkER) \cite{Chen2021Automatic}.
The input programs are automatically translated into a PDNet, and our on-demand PDNet slicing method is used to reduce it. The resulting PDNet slice can then be checked by using explicit-state model checking.
\textit{DAMER} supports the features of C programs with POSIX threads \cite{PThread2019} and offers support for various linear temporal properties. These properties can be expressed as LTL-$_\mathcal{X}$ formulae, as defined in Definition \ref{Def:LTL}. Once modeled, the relevant program variables or locations can be translated into PDNet propositions within \textit{DAMER} automatically.

To showcase the practical effectiveness of our methods in verifying LTL, we evaluated on a set of multi-threaded C programs, randomly chosen from the Software Verification Competition \cite{SV} and benchmarks in \cite{Li2023CAMC}.
Our experiment involves five concurrent programs that do not use POSIX thread functions. \textit{Fib} is a mathematical algorithm that divides tasks into multiple threads and combines the constraints of each computation. The other four algorithms we use to solve mutual exclusion problems in concurrent systems are \textit{Lamport} \cite{Ben2014Beyond}, \textit{Dekker} \cite{Burnim2011Testing}, \textit{Szymanski} \cite{Guo2016Conc}, and \textit{Peterson} \cite{Rodriguez2015Unfolding}.
Additionally, it involves five concurrent programs that use POSIX thread functions. \textit{Sync} \cite{SV} is an example program that implements thread synchronization through a condition variable. The remaining four programs i.e., \textit{Datarace}, \textit{Varmutex}, \textit{Rwlock}, and \textit{Lazy} \cite{SV}, all access shared memory protected by a single mutex lock.
For each program, we set two formulae to specify their safety properties expressed by a specific $error$ location and constraint properties expressed by the constraints relevant to the key variables.
Our code is publically available \footnote{\url{https://github.com/shirleylee03/damer/}\label{tool}}.

The experiments are conducted with $16GB$ memory. For the used benchmarks, the variation in time among different runs is relatively small. Therefore, it is sufficient to run them $10$ times to use the average.
The original complete result, as well as verified formulae ($\psi_1$ and $\psi_2$ in Tables \ref{Tab:SliCom} and \ref{Tab:VeriCom}), are publically available \ref{tool} for readers to examine. Here, all reported runtime is in milliseconds.
The whole process of each verified property for each benchmark is called verification in the following.

\subsection{PDNet v.s. its Peers}\label{Sub:Met}
We compare the performance of our method, called \textit{PDNet}, with slicing concurrent programs for model checking, called \textit{PS}, and slicing Petri nets, called \textit{TraPNSlice}.
\textit{PS} first builds PDG based on traditional control-flow dependencies and data-flow dependencies according to the definitions and calculation methods \cite{Qi2017Precise} to reduce the concurrent program itself. Then, the program slice is transformed to a traditional CPN model \cite{Jensen2009Formal} (as the example in Figure \ref{Fg:Example}($d$)) and the properties are verified by the same model checking algorithm. In this way, there exist the model conversions mentioned above, implemented on the same tool \textit{DAMER}.
\textit{TraPNSlice} first builds a traditional CPN model for the concurrent program and then slices this model for reduction. Also, such a reduced model is verified by the same model checking algorithm, implemented on the same tool \textit{DAMER}.
We evaluate the efficiency and effectiveness of our method on \textit{DMAER} by answering the following questions:

$Q1$: Are the unified model and on-demand slicing method based on PDNet more efficient than its peers?

$Q2$: Is our verification method based on our PDNet slice more effective than its peers?

\subsection{Slice Comparison}\label{Sub:SliceCom}

\begin{table*}[t]\centering
\caption{Slice Comparison of \textit{PS} and \textit{PDNet} (in milliseconds)}\label{Tab:SliCom}
\resizebox{\textwidth}{!}{
\begin{tabular}{@{}cccccccccccc@{}}
\toprule[1pt]

\multirow{2}{*}{Bench.} & \multirow{2}{*}{For.} & \multicolumn{6}{c}{\textit{PS}} & \multicolumn{3}{c}{\textit{PDNet}} & $S_{com}$/ \\ \cmidrule{3-11}

\multicolumn{2}{c}{} & $S_{dfd}$ & $S_{cfd}$ & $S_{clo}$ & $S_{ps}$ & $S_{mol}$ & $S_{com}$ & $T_{mol}$ & $T_{sli}$ & $T_{com}$ & $T_{com}$ \\ \midrule

\multirow{2}{*}{Fib}
& $\psi_1$    & 1.334     & 18.183  & 0.301     & 19.818   & 1.578     & 21.396  & 2.240     & 0.744   & 2.984 & 7.170  \\
& $\psi_2$ & 1.280  & 17.510  & 0.331 & 19.121   & 0.931 & 20.052  & 2.298 & 0.324 & 2.622 & 7.648 \\

\multirow{2}{*}{Lamport}
& $\psi_1$  & 2.006 & 165.664 & 0.578 & 168.248  & 2.552 & 170.800   & 4.201 & 1.610 & 5.811 & 29.393 \\
& $\psi_2$  & 1.968 & 165.432 & 0.854  & 168.254  & 2.257    & 170.511  & 4.209  & 1.504  & 5.712 & 29.850  \\

\multirow{2}{*}{Dekker}
& $\psi_1$ & 1.428  & 64.792  & 0.430 & 66.650   & 1.949  & 68.599    & 3.213  & 1.286 & 4.499 & 15.249 \\
& $\psi_2$  & 1.443  & 64.299  & 0.535  & 66.277   & 1.520  & 67.797 & 3.196 & 1.033  & 4.230 & 16.029 \\

\multirow{2}{*}{Szymanski}
& $\psi_1$ & 1.745  & 449.599 & 0.966     & 452.311  & 2.608  & 454.918   & 4.115     & 1.439  & 5.554  & 81.909  \\
& $\psi_2$  & 1.750  & 448.491 & 1.174 & 451.415  & 1.982  & 453.397  & 4.149  & 1.363  & 5.512 & 82.256 \\

\multirow{2}{*}{Peterson}
& $\psi_1$ & 1.282  & 40.541  & 0.252 & 42.075   & 1.577 & 43.652 & 2.675  & 1.109  & 3.783 & 11.538 \\
& $\psi_2$ & 1.317  & 40.171  & 0.340     & 41.828   & 1.200 & 43.028 & 2.585  & 0.836 & 3.421 & 12.576 \\

\multirow{2}{*}{Sync}
& $\psi_1$ & 1.757 & 14.505  & 0.342 & 16.605   & 1.265 & 17.870 & 1.838 & 0.743 & 2.581 & 6.923 \\
& $\psi_2$ & 1.725  & 14.545  & 0.679  & 16.949   & 1.146  & 18.095  & 1.817  & 0.728  & 2.545 & 7.110 \\

\multirow{2}{*}{Datarace}
& $\psi_1$ & 1.869 & 51.096  & 0.415 & 53.380 & 2.056 & 55.436 & 3.114 & 0.962 & 4.076 & 13.599  \\
& $\psi_2$ & 1.886  & 51.024  & 0.782  & 53.693   & 1.872 & 55.564 & 3.145  & 1.010 & 4.155 & 13.372 \\

\multirow{2}{*}{Rwlock}
& $\psi_1$ & 1.836 & 32.237  & 0.869 & 34.942   & 2.304  & 37.246  & 3.022 & 1.475  & 4.497 & 8.283 \\
& $\psi_2$ & 1.847 & 31.838  & 0.934 & 34.619   & 1.895  & 36.514  & 3.051 & 1.393  & 4.444 & 8.217 \\

\multirow{2}{*}{Varmutex}
& $\psi_1$  & 1.730 & 29.709  & 0.675 & 32.114   & 3.464 & 35.578 & 4.598  & 1.648  & 6.245 & 5.697 \\
& $\psi_2$  & 1.703  & 29.633  & 0.702 & 32.038 & 3.044 & 35.082 & 4.690 & 2.037 & 6.727 & 5.215 \\

\multirow{2}{*}{Lazy}
& $\psi_1$ & 1.620 & 11.914  & 0.232   & 13.766   & 1.356  & 15.122  & 1.784  & 0.754  & 2.538 & 5.959 \\
& $\psi_2$  & 1.662 & 12.014  & 0.421 & 14.097   & 1.035 & 15.132  & 1.785 & 0.733 & 2.518 & 6.009 \\ \midrule

\multicolumn{2}{c}{Average} & 1.659  & 87.660 & 0.591  & 89.910 & 1.879  & 91.789 & 3.086  & 1.137 & 4.223 & 18.700 \\
\bottomrule[1pt]
\end{tabular}}
\end{table*}

For $Q1$, we report the concrete slicing time.
In Table \ref{Tab:SliCom}, $S_{dfd}$ is the time to calculate the complete data-flow dependencies for PDG,  $S_{cfd}$ is the time to calculate the complete control-flow dependencies for PDG, and $S_{clo}$ is the time to capture the transitive closure of PDG.
Then, $S_{ps}$$=$$S_{dfd}$$+$$S_{cfd}$$+$$S_{clo}$ is the slicing time of \textit{PS}.
$S_{mol}$ is the modeling time to construct a CPN model for the program slice, and $S_{com}$$=$$S_{ps}$$+$$S_{mol}$ is the whole time of \textit{PS} before model checking.
As for \textit{PDNet}, $T_{mol}$ is the modeling time to construct the PDNet for the whole program, and $T_{sli}$ is the slicing time.
Here, $T_{mol}$ includes the time to calculate the control-flow structure and control-flow dependencies, and $T_{sli}$ includes the time to capture the data-flow dependencies as well as the transitive closure of PDNet.
$T_{com}$$=$$T_{mol}$$+$$T_{sli}$ is the whole time of \textit{PDNet} before model checking. Their average values are on the last row.

As we can see from Table \ref{Tab:SliCom}, $S_{com}$$>$$T_{com}$ holds for each verification.
The average of $S_{com}$ is $91.789$, and the average of $T_{com}$ is $4.222$. It implies the higher efficiency of our proposed method.
We calculate $S_{com}/T_{com}$ in Table \ref{Tab:SliCom}, and its average is $18.7$. Here, $10$ verification reduces the whole time by more than $10$ times, and $4$ verification reduces the whole time by more than $30$ times.
Our method can reduce computation when converting among multiple models.
Moreover, $S_{dfd}$$+$$S_{clo}$ of \textit{PS}, including the computation time of the data-flow dependencies and the transitive closure, corresponds to $T_{sli}$ of \textit{PDNet}.
It is obvious that $S_{dfd}$$+$$S_{clo}$$>$$T_{sli}$ holds for each verification.
The variable's data-flow dependencies are only calculated when it is added to the PDNet slice. This means that the use of PDNet can save on computation cost by avoiding the computation of irrelevant variable data-flow dependencies.

Note that $S_{mol}$$<$$T_{mol}$ holds for each verification. The average of $S_{mol}$ is $1.879$, and that of $T_{mol}$ is $3.086$. Their difference is $1.207$ which is insignificant.
Our approach to utilizing the unified model and on-demand slicing based on PDNet proves to be highly efficient, as it significantly reduces computation during model conversions and eliminates data-flow dependencies of irrelevant variables.
CPN is translated from a program slice; while PDNet is translated from the entire program. Thus PDNet needs extra places and arcs to depict control-flow dependencies. Yet such extra ones lead to the small impact on the overall time. In summary, our approach proves to be a much more cost-effective solution than its peers.

\subsection{Verification Comparison}\label{Sub:VeriCom}

\begin{table*}[t]\centering
\caption{Verification Comparison of \textit{PS}, \textit{TraPNSlice} and \textit{PDNet} (in microseconds)}\label{Tab:VeriCom}
\resizebox{\textwidth}{!}{
\begin{tabular}{@{}cccccccccccc@{}}
\toprule[1pt]
\multirow{2}{*}{Bench.} &
  \multirow{2}{*}{For.} &
  \multicolumn{2}{c}{\textit{PS}} &
  \multicolumn{4}{c}{\textit{TraPNSlice}} &
  \multicolumn{2}{c}{\textit{PDNet}} &
  \multirow{2}{*}{\begin{tabular}[c]{@{}c@{}}$S_{ver}$/\\ $T_{ver}$\end{tabular}} &
  \multirow{2}{*}{\begin{tabular}[c]{@{}c@{}}$C_{ver}$/\\ $T_{ver}$\end{tabular}} \\ \cmidrule{3-10}
  &  & $S_{ver}$ & $S_{res}$ & $C_{mol}$ & $C_{sli}$ & $C_{ver}$ & $C_{res}$ & $T_{ver}$ & $T_{res}$ &      &      \\ \midrule
\multirow{2}{*}{Fib}       & $\psi_1$ & 14931.152 & $F$       & 3.505     & 0.201     & 28951.807 & $F$       & 10544.388 & $F$       & 1.416 & 2.746 \\
                           & $\psi_2$ & 9.686     & $F$       & 3.501     & 0.152     & 18.267    & $F$       & 7.511     & $F$       & 1.290 & 2.432 \\
\multirow{2}{*}{Lamport}   & $\psi_1$ & 38.590    & $T$       & 5.917     & 0.617     & 77.759    & $T$       & 35.822    & $T$       & 1.077 & 2.171 \\
                           & $\psi_2$ & 8.075     & $F$       & 5.937     & 0.631     & 13.454    & $F$       & 7.339     & $F$       & 1.100 & 1.833 \\
\multirow{2}{*}{Dekker}    & $\psi_1$ & 13.782    & $T$       & 4.658     & 0.348     & 22.719    & $T$       & 11.251    & $T$       & 1.225 & 2.019 \\
                           & $\psi_2$ & 12.810    & $T$       & 4.637     & 0.369     & 23.224    & $T$       & 11.489    & $T$       & 1.115 & 2.021 \\
\multirow{2}{*}{Szymanski} & $\psi_1$ & 27.037    & $T$       & 5.778     & 0.534     & 44.302    & $T$       & 20.703    & $T$       & 1.306 & 2.140 \\
                           & $\psi_2$ & 9.968     & $F$       & 5.851     & 0.534     & 17.827    & $F$       & 9.666     & $F$       & 1.031 & 1.844 \\
\multirow{2}{*}{Peterson}  & $\psi_1$ & 12.695    & $T$       & 3.900     & 0.282     & 25.197    & $T$       & 11.344    & $T$       & 1.119 & 2.221 \\
                           & $\psi_2$ & 11.689    & $T$       & 3.889     & 0.290     & 26.034    & $T$       & 11.634    & $T$       & 1.005 & 2.238 \\
\multirow{2}{*}{Sync}      & $\psi_1$ & 8.420     & $T$       & 2.726     & 0.248     & 17.253    & $T$       & 8.415     & $T$       & 1.001 & 2.050 \\
                           & $\psi_2$ & 8.657     & $T$       & 2.735     & 0.230     & 17.584    & $T$       & 8.346     & $T$       & 1.037 & 2.107 \\
\multirow{2}{*}{Datarace}  & $\psi_1$ & 8009.776  & $T$       & 4.744     & 0.361     & 15999.411 & $T$       & 6947.670  & $T$       & 1.153 & 2.303 \\
                           & $\psi_2$ & 17.369    & $F$       & 4.768     & 0.350     & 31.782    & $F$       & 15.225    & $F$       & 1.141 & 2.088 \\
\multirow{2}{*}{Rwlock}    & $\psi_1$ & 40.028    & $T$       & 5.183     & 0.466     & 70.768    & $T$       & 33.947    & $T$       & 1.179 & 2.085 \\
                           & $\psi_2$ & 52.972    & $T$       & 5.143     & 0.466     & 72.940    & $T$       & 34.491    & $T$       & 1.536 & 2.115 \\
\multirow{2}{*}{Varmutex}  & $\psi_1$ & 45.990    & $T$       & 8.657     & 0.546     & 79.711    & $T$       & 27.537    & $T$       & 1.670 & 2.895 \\
                           & $\psi_2$ & 82.718    & $T$       & 8.485     & 0.534     & 81.173    & $T$       & 30.926    & $T$       & 2.675 & 2.625 \\
\multirow{2}{*}{Lazy}      & $\psi_1$ & 11.292    & $F$       & 2.798     & 0.193     & 18.521    & $F$       & 8.857     & $F$       & 1.275 & 2.091 \\
                           & $\psi_2$ & 8.074     & $F$       & 2.801     & 0.181     & 13.325    & $F$       & 7.465     & $F$       & 1.082 & 1.785 \\ \midrule
\multicolumn{2}{c}{Average}           & 1168.039  & /         & 4.781     & 0.377     & 2281.153  & /         & 889.701   & /         & 1.272 & 2.190  \\
\bottomrule[1pt]
\end{tabular}}
\end{table*}

For $Q2$, we report the verifying time and results ($T$ represents $true$, while $F$ represents $false$).
In Table \ref{Tab:VeriCom}, $S_{ver}$ is the time for verifying the CPN model of the program slice, and $S_{res}$ is the verifying result of \textit{PS}.
As for \textit{TraPNSlice}, $C_{mol}$ is the modeling time to construct a CPN model for the whole program, $C_{sli}$ is the time of traditional Petri net slicing, $C_{ver}$ is the time for verifying traditional Petri net slice, and $C_{res}$ is the verifying result.
$T_{ver}$ is the time for verifying PDNet slice, and $T_{res}$ is the verifying result of \textit{PDNet}.

As we can see from Table \ref{Tab:VeriCom}, $S_{res}$, $C_{res}$, and $T_{res}$ are all correct. It implies that the three slicing methods are correct and effective in \textit{DAMER}.
We calculate $S_{ver}/T_{ver}$ and $C_{ver}/T_{ver}$ as shown in Table \ref{Tab:VeriCom}.
Obviously, $S_{ver}$$>$$T_{ver}$ and $C_{ver}$$>$$T_{ver}$ holds for each verification, implying that our PDNet slicing with an on-demand way outperforms the program slicing and traditional Petri net slicing methods in terms of verification time.
The average value of $S_{ver}/T_{ver}$ is $1.272$.
As for the traditional program slicing method, it takes much time to calculate the domination relationship for the control-flow dependencies \cite{Qi2017Precise}.
The average value of $C_{ver}/T_{ver}$ is $2.190$.
It is evident that our PDNet slicing method can reduce the verifying time more than traditional Petri net slicing. The reason is that it can reduce the number of places, transitions, and explored states of the original PDNet, which is not reduced by traditional Petri net slicing.

Furthermore, the time required for modeling ($T_{mol}$ in Table \ref{Tab:SliCom}) and slicing ($T_{sli}$ in Table \ref{Tab:SliCom}) are significantly less than $T_{ver}$ in Table \ref{Tab:VeriCom}. As a result, the expense of constructing a PDNet model is reasonable and the reduction achieved through PDNet slicing is outstanding. This confirms the effectiveness and practicality of PDNet slicing.

\section{Related Work}\label{Sec:Rel}

Rakow \cite{Rakow2012Safety} suggested two static slicing algorithms to get a simplified Petri net with preserving CTL*-$_\mathcal{X}$ properties.
Khan \cite{Khan2013OptimizingVer} improved Rakow's algorithms and suggested a dynamic slicing algorithm for Algebraic Petri nets (APN).
Lorens et al. \cite{Llorens2008Dynamic,Llorens2016Dynamicslicing} and Yu et al. \cite{Yu2013Extended,Yu2015Dynamic} proposed two dynamic slicing algorithms that took into account the initial marking of Petri nets, further reducing the scale of the Petri net slice.
Then, two algorithms are improved by \cite{Llorens2023Maximal} that encounters the maximal slice and the minimal one from the initial net slice.
Roci et al. \cite{Roci2020A} proposed a restraining algorithm to slice from the Petri Nets model the nodes that participate in some particular executions.
Wang et al. \cite{Wang2020Measurement} proposed an improved Dynamic-Slicing-based Vulnerability Detection (IDS-VR) method to locate vulnerabilities in Workflow Nets. They \cite{Wang2020A} then proposed a Dynamic-Data-Slicing-based Vulnerability Detection (DDS-VD) method for data inconsistency detection of E-Commerce Systems.
These Petri net slicing algorithms \cite{Khan2018PNSlicing} are primarily suitable for workflow. The dependencies outlined in this paper are more complex and pose a challenge to the existing methods \cite{Khan2018PNSlicing,Roci2020A,Wang2020Measurement,Wang2020A,Llorens2023Maximal}.

Danicic et al. \cite{Danicic2018Static} defined non-termination insensitive (weak) slices and non-termination sensitive (strong) slices for non-deterministic programs. Chalupa et al. \cite{Chalupa2021Fast} proposed new algorithms for computing non-termination sensitive control dependence (NTSCD) and decisive order dependence (DOD) for fast Computation.
Masud et al. \cite{Masud2021Semantic} contributed a general proof of correctness for dependence-based slicing methods for interprocedural, possibly nonterminating programs. Although these works provide more precise definitions, semantics and proof of dependencies for non terminating programs, they are used for sequential programs, instead of concurrent program.
\section{Conclusion}\label{Sec:Con}
This paper introduces PDNet as a unified model for representing control-flow structures with dependencies. This saves computation by eliminating the need to convert among multiple models. Then, we propose an on-demand PDNet slicing method that reduces the scope of model checking by capturing data-flow dependencies related to variables from verified LTL-$_\mathcal{X}$. Our methodology is able to save cost from model conversions and complete calculation of data-flow dependencies, which is never seen in the existing work. We also implement an automatic concurrent program model checking tool called \textit{DAMER} based on PDNet for LTL-$_\mathcal{X}$ formulae. Our experiments have produced promising results.

Based on the current work, we plan to add the heuristic information from the dependencies information to help find counterexample paths for LTL-$_\mathcal{X}$ formulae as quickly as possible during exploration and integrate partial-order methods, e.g., partial order reduction or unfolding \cite{Liu2016A,Dou2019Maximal}, to reduce possible interleavings from the independent threads for PDNet.

\section*{Acknowledgments}\label{Sec:Ack}
\textbf{Funding} This work was supported by National Key Research and Development Program of China under Grant 2022YFB4501704.

\textbf{Statement} 
The authors have no relevant financial or non-financial interests, no potential competing interests and no applicable ethics approval to disclose. 


\printcredits

\bibliographystyle{cas-model2-names}
\bibliography{cas-refs-PDNet}

\newpage
\appendix


\section{Colored Petri Nets}\label{App:CPN}
To define PDNet (Program Dependence Net) based on CPN, we introduce the definitions of multiset and CPNs\cite{Jensen2009Formal}.
\begin{definition}[Multiset]
Let $S$ be a non-empty set. A multiset $ms$ over $S$ is a function $ms\colon S\rightarrow\mathbb{N}$ that maps each element into a non-negative integer.
$S_{MS}$ is the set of all multisets over $S$.
We use $+$ and $-$ for the sum and difference of two multisets.
And $=$, $<$, $>$, $\leq$, $\geq$ are comparisons of multisets, which are defined in the standard way.
\end{definition}

\begin{definition}[Colored Petri Net]\label{Def:CPN}
CPN is defined by a $9$-tuple $N$ $::=$ $(\Sigma$, $V$, $P$, $T$, $F$, $C$, $G$, $E$, $I)$, where:

1. $\Sigma$ is a finite non-empty set of types called color sets.

2. $V$ is a finite set of the typed variables. $\forall v$$\in$$V\colon\\Type[v]$$\in$$\Sigma$.

3. $P$ is a finite set of places.

4. $T$ is a finite set of transitions and $T\cap P=\emptyset$.

5. $F\subseteq(P$$\times$$T)\cup(T$$\times$$P)$ is a finite set of directed arcs.

6. $C\colon P$$\rightarrow$$\Sigma$ is a color set function, that assigns a color set $C(p)$ belonging to the set of types $\Sigma$ to each place $p$.

7. $G\colon T$$\rightarrow$$\mathbb{E}_V$ is a guard function, that assigns an expression $G(t)$ to each transition $t$. $\forall t$$\in$$T\colon(Type[G(t)]\in BOOL)\wedge(Type[Var(G(t))]\subseteq\Sigma)$.

8. $E\colon F$$\rightarrow$$\mathbb{E}_V$ is a function, that assigns an arc expression $E(f)$ to each arc $f$. $\forall f$$\in$$F\colon(Type[E(f)]=C(p(f))_{MS})$ $\wedge$ $(Type[Var(E(f))]\subseteq\Sigma)$, where $p(f)$ is the place connected to arc $f$.

9. $I\colon P$$\rightarrow$$\mathbb{E}_\emptyset$ is an initialization function, that assigns an initialization expression $I(p)$ to each place $p$. $\forall p$$\in$$P\colon\\(Type[I(p)]=C(p)_{MS})$$\wedge$$(Var(I(p))=\emptyset)$.
\end{definition}

The difference between PDNet and CPNs as the following aspects:
1. $P$ is divided into three subsets in PDNet, i.e., $P=P_c\cup P_v\cup P_f$. Concretely, $P_c$ is a subset of control places, $P_v$ is a subset of variable places, and $P_f$ is a subset of execution places.
2. $F$ is divided into three subsets in PDNet, i.e., $F=F_c\cup F_{rw}\cup F_f$. Concretely, $F_c\subseteq(P_c$$\times$$T) \cup (T$$\times$$P_c)$ is a subset of control arcs, $F_{rw}\subseteq(P_v$$\times$$T) \cup (T$$\times$$P_v)$ is a subset of read-write arcs, and $F_f\subseteq(P_f$$\times$$T) \cup (T$$\times$$P_f)$ is a subset of execution arcs.
In addition to the above differences, other definitions and constraints of PDNet are consistent with CPNs.

\section{Concurrent Program Semantics}\label{App:LTS}
C programs using POSIX threads \cite{PThread2019} refer to the concurrent programs in this paper. 
For simplicity, we consider the assignment statements to be atomic.
Take inspiration from the existing research on the function call \cite{Horwitz1990Interprocedural,Masud2021Semantic} and concurrency primitive \cite{PThread2019}, we introduce a simple concurrent program definition.
Suppose that $\mathcal{V}$ is a set of basic program variables (e.g., the types of $int$ and $double$), $val$ is a set of all values that a variable $\nu$ in $\mathcal{V}$ can take.
Suppose that $\mathcal{I}$ is a set of thread identifiers (i.e., \emph{pthread\_t}), $W$ is a set of program expressions, $Q$ is a set of operations that characterize the nature of the action performed by the statement.

\begin{definition}[Concurrent Program]\label{Def:CP}
$\mathcal{P}$ $::=$ $\langle$$K$, $\mathcal{M}$, $\mathcal{L}$, $\mathcal{C}$, $\mathcal{T}$, $\mathcal{H}$, $\mathcal{R}$, $m_0$, $h_0$$\rangle$ is a concurrent program, where:

1. $K$ is a finite set of all program locations.

2. $\mathcal{M}$ is a set of all memory states.

3. $\mathcal{L}$ is a finite set of POSIX mutex variables (i.e., pthread\_mutex\_t). $\ell\in\mathcal{L}$ is a mutex.

4. $\mathcal{C}$ is a finite set of condition variables (i.e., pthread\_cond\_t). $\gamma\in\mathcal{C}$ is a condition variable.

5. $\mathcal{T}\subseteq\mathcal{I}\times Q\times K\times K\times \mathcal{M}\times\mathcal{M}$ is a finite set of statements.

6. $\mathcal{H}\colon\mathcal{I}$$\rightarrow$$K$ is a function that assigns its current location to each thread identifier. 

7. $\mathcal{R}\colon\mathcal{M}$$\rightarrow$$val_{MS}$ is a function that assigns the current values of variables to each memory state. 

8. $h_0\in \mathcal{H}$ is the initial location function that assigns the initial location to each thread identifier.

9. $m_0\in\mathcal{M}$ is an initial memory state.
\end{definition}

\begin{remark}
  A statement $\tau::=\langle i, q, l, l', m, m'\rangle$ intuitively represents that a thread $i$$\in$$\mathcal{I}$ can execute an operation $q$$\in$$Q$, updating the location from $l$$\in$$ K$ to $l'$$\in$$ K$ and the memory state from $m$$\in$$\mathcal{M}$ to $m'$$\in$$\mathcal{M}$.
\end{remark}

The corresponding syntax of concurrent program $\mathcal{P}$ in this paper is described in Table \ref{Tab:syn}, where $\epsilon$, $val$, $\nu$, $\gamma$, $\ell$, $uop$, $rop$, $break$, $continue$, $return$, $entry$, $export$, $i$, $if$, $then$, $else$, $while$, $do$, $call$, $rets$, $lock$, $unlock$, $wait$, and $signal$ are the terminal symbols of a syntax.
Here, $\epsilon$ means the default value, $uop$ is a set of unary operators, $rop$ is a set of binary operators.
$\mathcal{P}$ contains a series of variable declarations $\nu^*$ and function declarations $fun^+$. 
A function $fun$ is uniquely identified by $i$, and contains an $entry$ and an $exit$, and $\nu^*$ represents a parameter list that can be defaulted. $\tau^*$ is a set of statements within this function.

The behavior of a statement $\tau$$\in$$\mathcal{T}$ is represented by its operation $q$, characterizing the nature of the action performed by this statement.
Then, we distinguish the following operation sets $\{local\}$, $\{calls\}$ and $\{syncs\}$ for $\tau ::= local$\textbar$calls$\textbar$syncs$ in Table \ref{Tab:syn}.
Here, operations assignment, jump, and branching belong to $\{local\}$, operations call site and return site belong to $\{calls\}$, and $\langle lock, \ell\rangle$, $\langle unlock, \ell\rangle$, $\langle signal, \gamma\rangle$ and $\langle wait, \gamma, \ell\rangle$ belong to $\{syncs\}$.

($1$) The local operations in $local$ can be used to model statements within a local thread. $\nu$$:=$$w$ represents a simple assignment operation where $\nu$$\in$$\mathcal{V}$ is a program variable and $w$$\in$$W$ is an expression over the program variables.
$jump$ represents a simple jump operation with a particular symbol, e.g., $break$, $continue$ and $return$.
$if(w)$ $then$ $(\tau_1^*)$ $else$ $(\tau_2^*)$ (abbreviated as $if(w)$) is a branch conditional structure with a boolean condition denoted by an expression $w$, and $while(w)$ $do$ $(\tau^*)$ (abbreviated as $while(w)$) is a loop structure with a boolean condition denoted by an expression $w$. The two structures are the branching operations that produce different possible subsequent executions.

($2$) $calls$ represents possible many function calls. As the syntax in Table \ref{Tab:syn}, $call$ represents call site operation making the control-flow turn to the called function, and $rets$ represents return site one, which makes the control-flow turn back from the called function.
Moreover, $cassign$ represents the assignments to all formal input parameters of $call$, and $rassign$ represents the assignments to the actual return parameters of $rets$.

($3$) The POSIX thread operations in $syncs$ could model the synchronization statements in different threads. $syncs=(\{lock,unlock\}$$\times$$\mathcal{L})\cup(\{signal\}$$\times$$ \mathcal{C}))\cup(\{wait\}$$\times$$\mathcal{C}$$\times$$\mathcal{L})$.
Operation $\langle lock, \ell\rangle$ represents a request operation to acquire $\ell$$\in$$\mathcal{L}$, i.e., pthread\_mutex\\\_lock(\verb'&'$\ell$), while $\langle unlock, \ell\rangle$ represents a request operation to release $\ell$$\in$$\mathcal{L}$, i.e., pthread\_mutex\_unlock(\verb'&'$\ell$).
$\langle signal, \gamma\rangle$ represents a request operation to signal other thread on $\gamma$$\in$$\mathcal{C}$, i.e., pthread\_cond\_signal(\verb'&'$\gamma$).
$\langle wait, \gamma, \ell\rangle$ represents a request operation to wait for a notification on $\gamma$$\in$$\mathcal{C}$ with $\ell$$\in$$\mathcal{L}$, i.e., pthread\_cond\_wait(\verb'&'$\gamma$, \verb'&'$\ell$).
Concrete arguments of POSIX thread functions mentioned above can be found from \cite{PThread2019}.

\begin{table}[h]
\begin{center} 
\caption{Simplified Syntax of Concurrent Programs}\label{Tab:syn}
\resizebox{0.5\textwidth}{!}{
\begin{tabular}{l}
\toprule
    $\mathcal{P} ::= (\nu^*) fun^+$ \\
    $fun ::= entry$ $i$ $(\nu^*)$ $(\tau^*)$ $exit$ \\
    $\tau ::= local$\textbar$calls$\textbar$syncs$ \\
    $local ::= \nu$$:=$$w$\textbar $jump$\textbar$if(w)$$then$$(\tau_1^*)$$else$$(\tau_2^*)$\textbar$while(w)$$do$$(\tau^*)$ \\
    $jump ::= break$\textbar$continue$\textbar $return$ \\
    $w ::= val$\textbar$\nu$\textbar$uop$ $w$\textbar$w$ $rop$ $w$ \\
    $calls ::= acall$\textbar$cassign$ $acall$ $rassign$ \\
    $acall ::= call$ $cassign$ $acall$ $rassign$ $rets$\textbar$acall$ $acall$ \textbar $\epsilon$ \\
    $cassign ::= (\nu$$:=$$w)^*$ \\
    $rassign ::= (\nu$$:=$$w)^*$ \\
    $syncs ::= \langle lock, \ell\rangle$\textbar$\langle unlock, \ell\rangle$\textbar$\langle signal, \gamma \rangle$\textbar$\langle wait, \gamma, \ell \rangle$ \\
\bottomrule
\end{tabular}
}
\end{center}
\end{table}

To express the operational semantics of a concurrent program for PDNet modeling, we define its labeled transition system (LTS) semantics based on Definition \ref{Def:CP} and Table \ref{Tab:syn}.

\begin{definition}[LTS Semantics of Concurrent Programs]\label{Def:LTS}
Given concurrent program $\mathcal{P}$, $\mathcal{N}_\mathcal{P}$ $::=$ $\langle \mathcal{S}, \mathcal{A}, \rightarrow\rangle$ is the labeled transition system of $\mathcal{P}$, where: 

1. $\mathcal{S}\subseteq\mathcal{H}\times\mathcal{M}\times (\mathcal{L}$$\rightarrow$$\mathcal{I})\times (\mathcal{C}$$\rightarrow $$\mathcal{I}_{MS})$ is a set of the program configurations.

2. $\mathcal{A}\subseteq\mathcal{T}\times\mathcal{B}$ is a set of actions, where $\mathcal{T}$ comes from $\mathcal{P}$.

3. $\rightarrow\subseteq\mathcal{S}\times\mathcal{A}\times\mathcal{S}$ is a set of transition relations on the program configurations $\mathcal{S}$.

\end{definition}

Formally, $s ::= \langle h, m, r, u\rangle$ is a configuration of $\mathcal{S}$, where $h$$\in$$\mathcal{H}$ is a function that indicates the current program location of every thread, $m$$\in$$\mathcal{M}$ is the current memory state, $r$ is a function that maps every mutex to a thread identifier, and $u$ is also a function maps every condition variable to a multiset of thread identifiers of those threads that currently wait on that condition variable.
$\mathcal{S}_0::=$$\langle h_0, m_0, r_0, u_0\rangle$ where $h_0$$\in$$\mathcal{H}$ and $m_0$$\in$$\mathcal{M}$ come from $\mathcal{P}$, $r_0$$:\mathcal{L}$$\rightarrow$$\{0\}$ represents that every mutex is not initially held by any threads, and $u_0$$:\mathcal{C}$$\rightarrow$$\emptyset$ represents that every condition variable does not initially block any threads.
Hence, we characterize the states of $\mathcal{P}$ by the configurations $\mathcal{S}$ of $\mathcal{N}_\mathcal{P}$.
$\alpha ::= \langle\tau, \beta\rangle$ is an action of $\mathcal{A}$, where $\tau$$\in$$\mathcal{T}$ is a statement of $\mathcal{P}$ and $\beta$$\in$$\mathcal{B}$ is an effect for operation $q$ from statement $\tau$.
The transition relation $\rightarrow$ on the configurations is represented by $s$$\stackrel{\langle\tau,\beta\rangle}{\longrightarrow}$$s'$.
The interleaved execution of $\tau$ could update configuration $s$ to new one $s'$ based on the effect $\beta$ corresponding to the operation of $\tau$.
In fact, the effect of an action $\alpha$$\in$$\mathcal{A}$ characterizes the nature of the transition relations with this action on configurations of $\mathcal{N}_\mathcal{P}$.
The effect is defined by $\mathcal{B}=(\{ass, jum, ret, tcd, fcd, call, rets\}$$\times$$K)\cup (\{acq, rel\}$$\times$$\mathcal{L})\cup (\{sig\}$$\times$$\mathcal{C})\cup(\{wa_1, wa_2, wa_3\}$$\times$$\mathcal{C}$$\times$$\mathcal{L}))$.

To formalize our PDNet modeling methods, the semantics of a concurrent program is expressed by the transition relations $\rightarrow$ on the program configurations under a current configuration $s$$=$$\langle h, m, r, u\rangle$ of $\mathcal{P}$ in Table \ref{Tab:sem}.
The intuition behind the semantics is how $s$ updates based on the transition relations with the actions of $\mathcal{A}$.
Thus, the execution of a statement gives rise to a transition relation in correspondence with the operation of the statement.
For convenience, the action referenced later is denoted by an abbreviation at the end of each row in Table \ref{Tab:sem}.
For instance, $ass$ represents the action $\langle \tau, \langle ass, l'\rangle\rangle$ where $\langle ass, l'\rangle$ is the effect corresponding to the operation of $\tau$, updating the program location to $l'$.
Here, suppose that an assignment operation is $\nu$$:=$$w$ in statement $\tau$.
$[\![w]\!]m$ denotes that the value evaluating by the expression $w$ under the memory state $m$. This value is assigned to variable $\nu$.
Thus, $m'$$=$$m[\nu$$\mapsto$$[\![w]\!]m]$ denotes a new memory state where $m'(\nu)$$=$$[\![w]\!]m$ and $m'(y)$$=$$m(y)$ ($\forall y$$\in$$\mathcal{V}$$:y$$\neq$$\nu$).

In the same way, $jum$ represents action $\langle \tau,\langle jum, l'\rangle\rangle$ where the jump operation of $\tau$ is $break$ or $continue$, updating the program location to $l'$. But $jum$ does not update a memory state.
$ret$ represents action $\langle \tau,\langle ret, l'\rangle\rangle$ where the jump operation of $\tau$ is $return$, updating the program location to $l'$. 
For a branching operation, $tcd$ represents action $\langle \tau,\langle tcd, l'\rangle\rangle$, where $[\![w]\!]m$$=$$true$, and $fcd$ represents action $\langle \tau,\langle tcd, l'\rangle\rangle$, where $[\![w]\!]m$$=$$flase$.
Neither $tcd$ nor $fcd$ updates the memory state. They update the program location to different one $l'$.
For a function call, $call$ represents action $\langle \tau, \langle call, l'\rangle\rangle$, where $l'$ is the entry of the called function, and $rets$ represents action $\langle \tau, \langle rets, l'\rangle\rangle$, where $l'$ is the return site of the calling function.
Similarly, suppose that $cassign$ of $call$ is $\nu_1$$:=$$w_1$ and $rassign$ of $rets$ is $\nu_2$$:=$$w_2$.
$m'$$=$$m[\nu_1$$\mapsto$$[\![w_1]\!]m]$ denotes the new memory state for $call$ and $m'$$=$$m[\nu_2$$\mapsto$$[\![w_2]\!]m]$ denotes the new one for $rets$.
In addition, $ass$, $jum$, $tcd$, $fcd$, $call$ and $rets$ do not update $r$ and $u$ of $s$.

\begin{table}[t]
\caption{Semantics of Concurrent Programs}\label{Tab:sem}
\begin{center}
\setlength{\tabcolsep}{0mm}{
\begin{tabular}{l}
\toprule
    $\frac{\tau := \langle i, q, l, l', m, m'\rangle\in\mathcal{T}\emph{ }q :=\nu:=w\emph{ }h(i)=l}
    {\langle h, m, r, u\rangle
    \stackrel{\langle \tau, \langle ass, l'\rangle\rangle}{\longrightarrow}
    \langle h[i\mapsto l'], m', r, u\rangle}$  ($ass$) \\

    $\frac{\tau := \langle i, q, l, l', m, m'\rangle\in\mathcal{T} q := break\emph{ }or \emph{ }continue\emph{ }h(i)=l}{\langle h, m, r, u\rangle\stackrel{\langle \tau, \langle jum, l'\rangle\rangle}{\longrightarrow}\langle h[i\mapsto l'], m, r, u\rangle}$  ($jum$) \\

    $\frac{\tau := \langle i, q, l, l', m, m'\rangle\in\mathcal{T}\emph{ }q := return\emph{ }h(i)=l}
    {\langle h, m, r, u\rangle
    \stackrel{\langle \tau, \langle ret, l'\rangle\rangle}{\longrightarrow}
    \langle h[i\mapsto l'], m', r, u\rangle}$  ($ret$) \\

    $\frac{\tau := \langle i, q, l, l', m, m'\rangle\in\mathcal{T}\emph{ }q := if(w)or while(w)\emph{ }[\![w]\!]m=true\emph{ }h(i)=l}
    {\langle h, m, r, u\rangle
    \stackrel{\langle \tau, \langle tcd, l'\rangle\rangle}{\longrightarrow}
    \langle h[i\mapsto l'], m, r, u\rangle}$ ($tcd$) \\

    $\frac{\tau := \langle i, q, l, l', m, m'\rangle\in\mathcal{T}\emph{ }q := if(w) or while(w) \emph{ } [\![w]\!]m=false \emph{ }h(i)=l}
    {\langle h, m, r, u\rangle
    \stackrel{\langle \tau, \langle fcd, l'\rangle\rangle}{\longrightarrow}
    \langle h[i\mapsto l'], m, r, u\rangle}$ ($fcd$) \\

    $\frac{\tau := \langle i, q, l, l', m, m'\rangle\in\mathcal{T}\emph{ }q := call\emph{ }h(i)=l}
    {\langle h, m, r, u\rangle
    \stackrel{\langle \tau, \langle call, l'\rangle\rangle}{\longrightarrow}
    \langle h[i\mapsto l'], m', r, u\rangle}$ ($call$) \\

    $\frac{\tau := \langle i, q, l, l', m, m'\rangle\in\mathcal{T}\emph{ }q := rets\emph{ }h(i)=l}
    {\langle h, m, r, u\rangle
    \stackrel{\langle \tau, \langle rets, l'\rangle\rangle}{\longrightarrow}
    \langle h[i\mapsto l'], m', r, u\rangle}$ ($rets$) \\

    $\frac{\tau := \langle i, q, l, l', m, m'\rangle\in\mathcal{T}\emph{ }q := \langle lock, \ell\rangle\emph{ }h(i)=l\emph{ }r(\ell)=0}
    {\langle h, m, r, u\rangle
    \stackrel{\langle \tau, \langle acq, \ell\rangle\rangle}{\longrightarrow}
    \langle h[i\mapsto l'], m, r[\ell\mapsto i], u\rangle}$ ($acq$) \\

    $\frac{\tau := \langle i, q, l, l', m, m'\rangle\in\mathcal{T}\emph{ }q := \langle unlock, \ell\rangle\emph{ }h(i)=l\emph{ }r(\ell)=i}
    {\langle h, m, r, u\rangle
    \stackrel{\langle \tau, \langle rel, \ell\rangle\rangle}{\longrightarrow}
    \langle h[i\mapsto l'], m, r[\ell\mapsto 0], u\rangle}$ ($rel$) \\

    $\frac{\tau := \langle i, q, l, l', m, m'\rangle\in\mathcal{T}\emph{ }q := \langle signal, \gamma\rangle\emph{ }h(i)=l\emph{ }\{j\}\in u(\gamma)}
    {\langle h, m, r, u\rangle
    \stackrel{\langle \tau, \langle sig, \gamma\rangle\rangle}{\longrightarrow}
    \langle h[i\mapsto l'], m, r, u[\gamma\mapsto u(\gamma)\setminus\{j\}\cup\{-j\}]\rangle}$ ($sig$) \\

    $\frac{\tau := \langle i, q, l, l', m, m'\rangle\in\mathcal{T}\emph{ }\emph{ }q := \langle wait, \gamma, \ell\rangle\emph{ }h(i)=l\emph{ }r(\ell)=i\emph{ }\{i\}\notin u(\gamma)}
    {\langle h, m, r, u\rangle
    \stackrel{\langle \tau, \langle wa_1, \gamma, \ell\rangle\rangle}{\longrightarrow}
    \langle h, m, r[\ell\mapsto 0], u[\gamma\mapsto u(\gamma)\cup\{i\}]\rangle}$ ($wa_1$) \\

    $\frac{\tau := \langle i, q, l, l', m, m'\rangle\in\mathcal{T}\emph{ }q := \langle wait, \gamma, \ell\rangle\emph{ }h(i)=l\emph{ }r(\ell)=0\emph{ }\{-i\}\in u(\gamma)}
    {\langle h, m, r, u\rangle
    \stackrel{\langle\tau,\langle wa_2,\gamma, \ell\rangle\rangle}{\longrightarrow}
    \langle h, m, r, u[\gamma\mapsto u(\gamma)\setminus\{-i\}]\rangle}$ ($wa_2$)
 \\

    $\frac{\tau := \langle i, q, l, l', m, m'\rangle\in\mathcal{T}\emph{ }q := \langle wait,\gamma,\ell\rangle\emph{ }h(i)=l\emph{ }r(\ell)=0}
    {\langle h, m, r, u\rangle
    \stackrel{\langle\tau,\langle wa_3,\gamma, \ell\rangle\rangle}{\longrightarrow}
    \langle h[i\mapsto l'], m, r[\ell\mapsto i], u\rangle}$($wa_3$) \\
\bottomrule
\end{tabular}}
\end{center}
\end{table}

Moreover, $acq$ represents action $\langle \tau,\langle acq,\ell\rangle$ corresponding to operation $\langle lock, \ell\rangle$. If $\ell$ is not held by any thread ($r(\ell)$$=$$0$), $acq$ represents that thread $i$ obtains this mutex $\ell$ ($r[\ell$$\mapsto$$i]$) and updates the program location to $l'$.
However, if $\ell$ is held by another thread, thread $i$ could be blocked by $\ell$, and current configuration $s$ cannot be updated by $acq$. 
$rel$ represents action $\langle \tau,\langle rel,\ell\rangle$ corresponding to operation $\langle unlock, \ell\rangle$.
Here, $r(\ell)$$=$$i$ means that the mutex $\ell$ is held by thread $i$. If $r(\ell)$$=$$i$, thread $i$ could release this mutex $\ell$ ($r[\ell$$\mapsto$$0]$) and updates the program location to $l'$.
Then, $sig$ represents action $\langle \tau, \langle sig, \gamma\rangle\rangle$ corresponding to operation $\langle signal, \gamma\rangle$. Thread $i$ could notify thread $j$ belonging to $u(\gamma)$ ($\{j\}$$\in$$u(\gamma)$). Thus, thread $j$ could be notified by thread $i$ ($u[\gamma$$\mapsto$$u(\gamma)$$\setminus$$\{j\}$$\cup$$\{-j\}]$).
It updates the program location to $l'$.
Particularly, operation $\langle wait, \gamma, \ell\rangle$ corresponds to three actions $wa_1$, $wa_2$ and $wa_3$, where only $wa_3$ updates the program location to $l'$.
If mutex $\ell$ is held by thread $i$ ($r(\ell)$$=$$i$) and thread $i$ is not waiting for $\gamma$ currently ($\{i\}$$\notin$$u(\gamma)$), $wa_1$ ($\langle wa_1, \gamma,\ell\rangle$) represents that the action releases mutex $\ell$ ($r[\ell$$\mapsto$$0]$), and thread $i$ is added to the current thread multiset waiting on condition variable $\gamma$ ($u[\gamma$$\mapsto$$u(\gamma)$$\cup$$\{i\}]$).
Then, $wa_2$ ($\langle wa_2, \gamma,\ell\rangle$) represents that the thread $i$ is blocked until a thread $j$ ($\{-i\}$$\in$$u(\gamma)$) is notified by condition variable $\gamma$ . Thus, thread $i$ no long waits for a notification on $\gamma$ ($u[\gamma$$\mapsto$$u(\gamma)$$\setminus$$\{-i\}]$).
Finally, if $\ell$ is not held ($r(\ell)=0$), $wa_3$ ($\langle wa_3, \gamma,\ell\rangle$) represents that the action acquires mutex $\ell$ again ($r[\ell$$\mapsto$$i]$) and updates the program location to $i$.
In addition, $acq$, $rel$, $sig$, $wa_1$, $wa_2$ and $wa_3$ do not update $m$ of the current configuration $s$.
\section{Post-process Algorithm}\label{App:Post}

\begin{breakablealgorithm}\caption{Slicing Post-process Algorithm}\label{Alg:Post}
\begin{algorithmic}[1]
\Require A PDNet $N$ and its slice $N'$ w.r.t. $Crit$;
\Ensure The final PDNet slice $N''$;
    \ForAll {$p$$\in$$(P'$$\cap$$P_f)\wedge(\forall t$$\in$$^\bullet$$p$$\colon(t,p)$$\notin$$F_f'$)}
    /$\ast F_f'$ is the execution arc set$\ast$/
       \State $T_f :=$ \Call {FINDEXE}{$t,T$}; /$\ast T$ is the transition set of $N\ast$/
    \ForAll {$t'$$\in$$T_f$}
       \State ADDARC($t',p$); /$\ast (t',p)$ is a new execution arc$\ast$/
    \EndFor
\EndFor
\end{algorithmic}
\end{breakablealgorithm}

If there exists an execution place $p$ of $N'$ that lacks execution arcs connected to any transition in $^\bullet p$, the execution order of $p$ is not complete (Line 1). Function FINDEXE($t,T$) can find the transitions performed before the transition in $^\bullet$$p$ based on $T$ (Line 2). For each transition $t'$ in $T_f$, a new execution arc $(t',p)$ is constructed by ADDARC($t',p$) (Line 4) to complete a control-flow structure. Finally, the final PDNet slice $N''$ is tractable for the model checking.

\section{Correctness Proof}\label{App:Ana}

We prove the correctness of PDNet slice $N'$ expressed by $N\models\psi$$\Leftrightarrow$$N'\models\psi$.
Firstly, we give some definitions for the correctness proof.
Note that an occurrence sequence is defined in Definition \ref{Def:Seq}, and an LTL-$_\mathcal{X}$ formula is defined in Definition \ref{Def:LTL}.

\begin{definition}[Marking Projection]\label{Def:proj}
Let $N$ be a PDNet, $M$ a marking of $N$, $\psi$ an LTL-$_\mathcal{X}$ formula of $N$, and $Crit$ a place subset extracted from $\psi$.
A projection function $\downarrow[\psi]M:Crit$$\rightarrow$$\mathbb{E}_\emptyset$ is a local marking function w.r.t. $Crit$, that assigns an expression $M(p)$ to each place $p$. $\forall p$$\in$$Crit$$:$ $Type[M(p)]=C(p)_{MS}\wedge(Var(M(p))=\emptyset)$.
\end{definition}

\begin{definition}[$\psi$-equivalent Marking]\label{Def:equM}
Let $N$ be a PDNet, $\psi$ an LTL-$_\mathcal{X}$ formula of $N$, and $M$ and $M'$ two markings of $N$.
$M$ and $M'$ are $\psi$-equivalent, denoted by $M\stackrel{\psi}{\rightsquigarrow}M'$, if $\downarrow[\psi]M=\downarrow[\psi]M'$.
\end{definition}

\begin{definition}[$\psi$-stuttering Equivalent Transition] \label{Def:equT}
Let $N$ be a PDNet, $\psi$ an LTL-$_\mathcal{X}$ formula of $N$, $\omega$ the occurrence sequence $(t_1, b_1)$, $(t_2, b_2)$, $\cdots$, $(t_{i-1}, b_{i-1})$, $(t_i, b_i)$, $(t_{i+1}, b_{i+1})$, $\cdots$, $(t_n, b_n)$ of $N$ defined in Definition \ref{Def:Seq}, and $M(\omega)$ the marking sequence $M_0$, $M_1$, $\cdots$, $M_{i-1}$, $M_i$, $M_{i+1}$, $\cdots$, $M_n$ generated by occurring every binding element of $\omega$ in turn.
$t_i$ ($1$$\leq$$i$$\leq$$n$) is a $\psi$-stuttering equivalent transition if $M_{i-1}\stackrel{\psi}{\rightsquigarrow}M_{i}$.
\end{definition}

\begin{definition}[$\psi$-stuttering Equivalent Marking Sequences] \label{Def:equMS}
Let $N$ be a PDNet, $\psi$ an LTL-$_\mathcal{X}$ formula of $N$, $\omega$ an occurrence sequence of $N$, $M(\omega)$ the marking sequence of $\omega$ starting from $M_0$, and $\omega'$ be generated by eliminating some binding elements from $\omega$, and $M(\omega')$ marking sequence of $\omega'$ starting from $M_0'$.
$M(\omega)$ and $M(\omega')$ are $\psi$-stuttering equivalent marking sequences, denoted by $M(\omega)\stackrel{st_\psi}{\rightsquigarrow}M(\omega')$, if
$\downarrow[\psi]M_0=\downarrow[\psi]M_0'$, and
the eliminated transitions from $\omega$ are $\psi$-stuttering equivalent transitions.
\end{definition}

Intuitively, the $\psi$-non-stuttering equivalent transitions of $\omega$ are all preserved in $\omega'$.

\begin{definition}[$\psi$-stuttering Equivalent PDNets] \label{Def:equP}
Let $N$ be a PDNet, $\psi$ an LTL-$_\mathcal{X}$ formula of $N$, $N'$ a PDNet with the same LTL-$_\mathcal{X}$ formula $\psi$, $M_0$ the initial marking of $N$, and $M_0'$ the initial marking of $N'$.
$N$ and $N'$ are $\psi$-stuttering equivalent PDNets, denoted by $N\stackrel{st_\psi}{\rightsquigarrow}N'$, if

1) $\downarrow[\psi]M_0=\downarrow[\psi]M_0'$, and

2) for each marking sequence $M(\omega)$ of $N$ starting from $M_0$, there exists a marking sequence $M(\omega')$ of $N'$ starting from $M_0'$ such that $M(\omega)\stackrel{st_\psi}{\rightsquigarrow}M(\omega')$, and

3) for each marking sequence $M(\omega')$ of $N'$ starting from $M_0'$, there exists a marking sequence $M(\omega)$ of $N$ starting from $M_0$ such that $M(\omega')\stackrel{st_\psi}{\rightsquigarrow}M(\omega)$.
\end{definition}

\begin{theorem}\label{The:stuInv}
\cite{Peled1997Stutter}
Let $N$ be a PDNet, $\psi$ an LTL-$_\mathcal{X}$ formula of $N$, and $N'$ a PDNet with the same LTL-$_\mathcal{X}$ formula $\psi$.
An LTL-$_\mathcal{X}$ formula is invariant under stuttering, denoted by $N\stackrel{st_\psi}{\rightsquigarrow}N'$, iff $N\models\psi\Leftrightarrow N'\models\psi$.
\end{theorem}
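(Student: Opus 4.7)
The theorem is the classical stuttering-invariance result for LTL-$_\mathcal{X}$ (Peled 1997), now lifted from linear sequences to two PDNets related by the $\psi$-stuttering equivalence of Definition~\ref{Def:equP}. My plan is to reduce the PDNet-level statement to a sequence-level lemma, prove that lemma by structural induction on $\psi$, and then chain the two directions through the sequence correspondence in Definition~\ref{Def:equP}.

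The central lemma I would establish first is: whenever $M(\omega_1)\stackrel{st_\psi}{\rightsquigarrow}M(\omega_2)$, then for every LTL-$_\mathcal{X}$ subformula $\varphi$ of $\psi$ and every position $i$ in $\omega_2$ with matched position $\mu(i)$ in $\omega_1$, the two sequences satisfy $\varphi$ identically. Because the eliminated transitions in Definition~\ref{Def:equT} are precisely those that leave the projection $\downarrow[\psi]M$ unchanged, I can define a monotone map $\mu:\mathbb{N}\to\mathbb{N}$ from positions of the shorter sequence to positions of the longer one such that $\downarrow[\psi]M_{\mu(i)}=\downarrow[\psi]M'_i$ for all $i$. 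This map is the technical bridge that lets the induction go through.

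I would then prove the lemma by induction on $\varphi$. For atomic propositions $is\mbox{-}fireable(t)$ and $token\mbox{-}value(p)\,r\,c$, I must verify that their truth value depends only on $\downarrow[\psi]M$; this uses Definition~\ref{Def:Crit}, which constructs $Crit$ precisely to contain the input places witnessing firability and the input places of transitions modifying $p$, so equal projections force equal propositional truth values. Boolean connectives are routine. The temporal cases $\mathcal{F}\varphi$, $\mathcal{G}\varphi$, and $\psi_1\mathcal{U}\psi_2$ are the heart of the argument: I would use $\mu$ to translate an existential/universal witness position from one sequence to the other, and crucially exploit the fact that every stuttering block has constant projection, so truth of a sub-$\mathcal{X}$-free formula never changes across a block. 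The absence of $\mathcal{X}$ is exactly what prevents a witness from "looking at" a particular in-block position, which would otherwise break the translation.

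With the lemma in hand, the forward direction ($N\stackrel{st_\psi}{\rightsquigarrow}N'\Rightarrow(N\models\psi\Leftrightarrow N'\models\psi)$) is straightforward: assume $N\models\psi$; for any maximal marking sequence of $N'$ condition~3 of Definition~\ref{Def:equP} supplies a stutter-equivalent sequence of $N$ that satisfies $\psi$, and the lemma transports satisfaction to $N'$; the symmetric use of condition~2 gives the other implication. The backward direction is the main obstacle: from the hypothesis that $N$ and $N'$ agree on every instance of $\psi$, one must conclude that their marking sequences are pairwise related by $\stackrel{st_\psi}{\rightsquigarrow}$. The standard route is the Peled--Wilke expressive-completeness theorem, showing that any two sequences that are not $\psi$-stuttering equivalent can be separated by some LTL-$_\mathcal{X}$ formula built over the propositions of $\psi$; contrapositively, agreement on all such formulae forces stuttering equivalence. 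Because this converse is exactly the content of the cited reference, I would invoke it as a black box rather than reprove it, noting that only the forward direction is needed for the slicing correctness application in Section~\ref{Sec:Slice}.
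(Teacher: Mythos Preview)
The paper does not prove this theorem at all: it is stated with the citation \cite{Peled1997Stutter} attached and followed only by the sentence ``The above theorem shows that an LTL-$_\mathcal{X}$ formula $\psi$ is invariant under stuttering.'' It is treated as an imported classical fact and used (only in its forward direction) inside the proof of Theorem~\ref{Cor:sas}. So your proposal already goes well beyond what the paper does.

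Your plan for the forward direction is the standard and correct one: build the monotone index map $\mu$ between stuttering-equivalent marking sequences, check that atomic truth depends only on $\downarrow[\psi]M$ via Definition~\ref{Def:Crit}, and then do structural induction over the $\mathcal{X}$-free temporal connectives. That is exactly the Peled argument specialised to the PDNet projection, and it is all the paper actually needs.

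The one genuine issue is the backward direction. As written, the theorem's ``iff'' would require that $(N\models\psi\Leftrightarrow N'\models\psi)$ for a \emph{single fixed} $\psi$ forces $N\stackrel{st_\psi}{\rightsquigarrow}N'$. The Peled--Wilke expressive-completeness result you invoke gives this only from agreement on \emph{all} LTL-$_\mathcal{X}$ formulae over the relevant atoms, not from agreement on one formula; for a single $\psi$ the implication is plainly false (take any two non-stutter-equivalent nets that both satisfy $\psi=\mathit{true}$). You are right to flag this direction as the obstacle, but your proposed black-box invocation does not close the gap. Since the paper neither proves nor uses the converse, the cleanest resolution is to note that the ``iff'' in the statement appears to be an imprecision inherited from the cited source and that only the forward implication is required for Theorem~\ref{Cor:sas}.
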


The above theorem shows that an LTL-$_\mathcal{X}$ formula $\psi$ is invariant under stuttering \cite{Peled1997Stutter}.

\begin{theorem}\label{The:stuInvPDNet}
Let $N$ be a PDNet, $M_0$ the initial marking of $N$, $\psi$ an LTL-$_\mathcal{X}$ formula of $N$, $Crit$ be extracted from $\psi$, $N'$ w.r.t. $Crit$ the final PDNet slice from $N$ by Algorithm \ref{Alg:Slicing} and \ref{Alg:Post}, and $M_0'$ the initial marking of $N'$.
$N\stackrel{st_\psi}{\rightsquigarrow}N'$.
\end{theorem}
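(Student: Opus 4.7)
The plan is to verify the three conditions of $\psi$-stuttering equivalence of PDNets, namely initial-marking agreement on $Crit$ together with both directions of simulation. Condition~(1), $\downarrow[\psi]M_0 = \downarrow[\psi]M_0'$, is immediate: Line~1 of Algorithm~\ref{Alg:Slicing} sets $P' := Crit$ and $P'$ only grows thereafter, so $Crit \subseteq P'$; the slice $N'$ inherits the initialisation expression $I(p)$ from $N$ for every retained place, and Algorithm~\ref{Alg:Post} adds only arcs, not tokens. Hence $M_0$ and $M_0'$ coincide on every $p \in Crit$.

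For condition~(2), given any occurrence sequence $\omega$ of $N$ with marking sequence $M(\omega)$, I would build $\omega'$ by deleting every binding element $(t,b)$ of $\omega$ with $t \notin T'$, and then argue (a) that $\omega'$ is a legal occurrence sequence of $N'$ from $M_0'$ and (b) that $M(\omega) \stackrel{st_\psi}{\rightsquigarrow} M(\omega')$. Both sub-claims reduce to a single key lemma: whenever $t \notin T'$ fires from a reachable marking of $N$, the marking projection onto $Crit$ is preserved, and the enabling of every preserved transition is unaffected. The first half is proved contrapositively: if $t$ altered the token content of some $p \in Crit$, the loop on $p^\bullet$ at Lines~10--11 of \textsc{PROPA} would have placed $t$ into $T'$. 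The second half follows from how \textsc{PROPA} propagates backward through control and read-write arcs and from the data-flow clauses \textsc{INPA} and \textsc{INCON} at Lines~16--19.

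For condition~(3), any occurrence sequence $\omega'$ of $N'$ corresponds, away from the post-processing shortcut arcs, to a sequence of binding elements already legal in $N$ from $M_0$. Each use of a synthesised execution arc $(t',p)$ introduced by Algorithm~\ref{Alg:Post} is expanded in $N$ into the original path of sliced-away transitions that produced a token on $p$; by the key lemma every inserted transition is $\psi$-stuttering equivalent, so the lifted $N$-sequence has the same $\downarrow[\psi]$-projection as $\omega'$. Combined with the statement of Theorem~\ref{The:stuInv}, the three conditions together deliver $N \models \psi \Leftrightarrow N' \models \psi$.

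The main obstacle is the key lemma itself. The delicate case is a transition $t \notin T'$ that does not touch $Crit$ directly yet could, via token flow through non-$Crit$ places, alter the enabling of a later preserved transition $t'' \in T'$ in a way that eventually changes a $Crit$-place marking. Ruling this out requires a simultaneous induction over the levels of the transitive closure $\stackrel{d}{\longrightarrow}^*$ realised by repeated calls to \textsc{PROPA}, paired with a case analysis over the four kinds of control-flow dependencies of Definition~\ref{Def:CFD} and the two kinds of data-flow dependencies of Definition~\ref{Def:DFD}, showing that Algorithm~\ref{Alg:Slicing} captures every dependency path that can reach $Crit$. A subsidiary but nontrivial check is that the shortcut arcs added by Algorithm~\ref{Alg:Post} neither introduce firings unrealisable in $N$ nor forbid firings that $N$ exhibits modulo $\psi$-stuttering; this is where the choice of \textsc{FINDEXE} has to be shown to respect the reachability structure of the original PDNet.
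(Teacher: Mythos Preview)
Your proposal follows the same overall route as the paper: verify the three clauses of Definition~\ref{Def:equP} in turn, with clause~(1) immediate from $Crit\subseteq P'$ and clauses~(2)--(3) handled by relating occurrence sequences of $N$ and $N'$. The organisational difference is that the paper does not isolate your ``key lemma'' but instead runs a direct induction on the length $|\omega|$ of the $N$-sequence: at the inductive step it splits on whether the last transition $t_{k+1}$ is sliced, and in the sliced case argues directly from the two proposition forms of Definition~\ref{Def:LTL} that $t_{k+1}$ is $\psi$-stuttering equivalent. Your formulation is more explicit on two points the paper's induction leaves implicit: that the projected sequence $\omega'$ is actually a legal occurrence sequence of $N'$ (your sub-claim~(a), which requires the enabling-preservation half of your lemma), and the role of the shortcut arcs from Algorithm~\ref{Alg:Post} in lifting $N'$-runs back to $N$-runs for clause~(3). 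The paper's clause~(3) argument is a near-verbatim replay of its clause~(2) induction and does not mention the post-processing arcs at all. So your decomposition names the genuine technical obligations more precisely, at the cost of having to discharge the key lemma as a separate piece; the paper's shorter induction is formally acceptable once one grants the unstated firability of $\omega'$ in $N'$.
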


\begin{proof}
According to Definition \ref{Def:equP}'s 1), $\downarrow[\psi]M_0=\downarrow[\psi]M_0'$ holds, because we keep all places in $Crit$, and Algorithm \ref{Alg:Slicing} does not change the initial marking of $N$.

Then, we prove Definition \ref{Def:equP}'s 2).
Let $\omega$ be an arbitrary occurrence sequence of $N$ arbitrarily. There exists an occurrence sequence $\omega'$ that is generated by eliminating some binding elements from $\omega$ by Algorithm \ref{Alg:Slicing}.
$M(\omega)$ is the marking sequence corresponding to $\omega$, and $M(\omega')$ is the marking sequence corresponding to $\omega'$.
We prove $M(\omega)\stackrel{st_\psi}{\rightsquigarrow}M(\omega')$ by using structural induction on the length of occurrence sequence of $\omega$, denoted by $\vert\omega\vert$.

 \textit{Base case}: Let $\vert\omega\vert$$=$$1$. $\downarrow[\psi]M_0=\downarrow[\psi]M_0'$ holds.

 \textit{Induction}: Assume that it holds when \textbar$\omega$\textbar$=$$k$. That is, $\omega$$=$$(t_{i1}, b_{i1})$, $(t_{i2}, b_{i2})$, $\cdots$, $(t_{ik}, b_{ik})$ of $N$, $\omega'$$=$$(t_{j1}, b_{j1})$, $(t_{j2}, b_{j2})$, $\cdots$, $(t_{jm}, b_{jm})$ ($m$$\leq$$k$), and $M(\omega)\stackrel{st_\psi}{\rightsquigarrow}M(\omega')$.
 Then we consider whether it holds when $\vert\omega\vert$$=$$k$$+$$1$.
 Suppose that the last transition is $t_{k+1}$, and $\omega(t_{k+1},b_{k+1})$ is the extended occurrence sequence.
 There are two cases for $\omega'$: $t_{k+1}$ is sliced and $t_{k+1}$ is not sliced.

 Case 1: $t_{k+1}$ is sliced by Algorithm \ref{Alg:Slicing}.
 That is, ${\nexists}p$$\in$$Crit$$:$ $t_{k+1}$$\stackrel{d}{\longrightarrow}$$^*$$p$.
 Consider the two proposition forms. Let $po$ be a proposition from $\psi$ arbitrarily.
 If $po$ is in the form of $token-value(p_t)\emph{ }rop\emph{ }c$, $t_{k+1}$$\notin$$^\bullet p_t$ or $E(t_{k+1},p_t)$$=$$E(p_t,t_{k+1})$, and the evaluation of this proposition under each marking is not change.
 If $po$ is in the form of $is\mbox{-} fireable(t)$, $t_{k+1}$ does not affect the enabling condition of $t$ according to Definition \ref{Def:Crit} and Algorithm \ref{Alg:Slicing}, and the evaluation of this proposition under each marking is not change.
 Thus, $t_{k+1}$ is a $\psi$-stuttering equivalent transition. As a result, $M(\omega(t_{k+1},b_{k+1}))\stackrel{st_\psi}{\rightsquigarrow}M(\omega')$.

 Case 2: $t_{k+1}$ is not sliced, and $\omega'(t_{k+1},b_{k+1})$ is the extended occurrence sequence.
 According to $M(\omega)\stackrel{st_\psi}{\rightsquigarrow}M(\omega')$, $t_{k+1}$ produces the same effect for $\omega$ and $\omega'$.
 As a result, $M(\omega(t_{k+1},b_{k+1}))\stackrel{st_\psi}{\rightsquigarrow}M(\omega'(t_{k+1},b_{k+1}))$.

 Finally, we prove Definition \ref{Def:equP}'s 3).
 Let $\omega'$ be an arbitrary occurrence sequence of $N'$. There exists an occurrence sequence $\omega$ that is generated by adding some binding elements to $\omega'$ that are identified by Algorithm \ref{Alg:Slicing}.
 $M(\omega')$ is the marking sequence corresponding to $\omega'$, and $M(\omega)$ is the marking sequence corresponding to $\omega$.
 Then, we prove that $M(\omega')\stackrel{st_\psi}{\rightsquigarrow}M(\omega)$ by using structural induction on the length of occurrence sequence $\omega'$, denoted by $\vert\omega'\vert$.

 \textit{Base case}: Let \textbar$\omega'$\textbar$=$$1$. $\downarrow[\psi]M_0'=\downarrow[\psi]M_0$ holds.

 \textit{Induction}: Assume that it holds when \textbar$\omega'$\textbar$=$$k'$.
 That is, $\omega'$$=$$(t_{i1}, b_{i1})$, $(t_{i2}, b_{i2})$, $\cdots$, $(t_{ik'}, b_{ik'})$ of $N$, $\omega$$=$$(t_{j1}, b_{j1})$, $(t_{j2}, b_{j2})$, $\cdots$, $(t_{jm'}, b_{jm'})$ ($m'$$\geq$$k'$), and $M(\omega')\stackrel{st_\psi}{\rightsquigarrow}M(\omega)$.
 Then we consider whether it holds when \textbar$\omega'$\textbar$=$$k'$$+$$1$.
 Suppose the last transition is $t_{k'+1}$, and $\omega'(t_{k'+1},b_{k'+1})$ is the extended occurrence sequence.
 Obviously, $t_{k'+1}$ belongs to $\omega$, and $\omega(t_{k'+1},b_{k'+1})$ is the extended occurrence sequence.
 $t_{k'+1}$ produces the same effect for $\omega$ and $\omega'$. As a result, $M(\omega'(t_{k'+1},b_{k'+1}))\stackrel{st_\psi}{\rightsquigarrow}M(\omega(t_{k'+1},b_{k'+1}))$.

 Thus, $N\stackrel{st_\psi}{\rightsquigarrow}N'$ holds based on Definition \ref{Def:equP}.
\end{proof}

\begin{theorem}\label{Cor:sas}
Let $N$ be a PDNet, $\psi$ an LTL-$_\mathcal{X}$ formula, $Crit$ be extracted from $\psi$, and $N'$ w.r.t. $Crit$ be a reduced PDNet by Algorithm \ref{Alg:Slicing}.
$N\models\psi\Leftrightarrow N'\models\psi$. 
\end{theorem}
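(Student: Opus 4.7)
The plan is to derive this statement as an immediate corollary of the two preceding theorems. Theorem \ref{The:stuInvPDNet} has already established that the PDNet $N$ and its slice $N'$ (produced by Algorithm \ref{Alg:Slicing} together with the post-process Algorithm \ref{Alg:Post}) are $\psi$-stuttering equivalent, i.e., $N \stackrel{st_\psi}{\rightsquigarrow} N'$. Theorem \ref{The:stuInv} then asserts that $N \stackrel{st_\psi}{\rightsquigarrow} N'$ holds if and only if $N \models \psi \Leftrightarrow N' \models \psi$. Chaining these gives the desired biconditional in a single line.

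Concretely, I would proceed as follows. First, apply Theorem \ref{The:stuInvPDNet} to the given $N$, $\psi$, $\mathit{Crit}$, and $N'$, obtaining $N \stackrel{st_\psi}{\rightsquigarrow} N'$. Second, invoke Theorem \ref{The:stuInv} in the forward direction on this stuttering equivalence to conclude $N \models \psi \Leftrightarrow N' \models \psi$. Since no additional structural reasoning about the algorithm is required at this stage, each step reduces to citing the appropriate theorem with the matching hypotheses.

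There is essentially no obstacle inside this proof itself; all the combinatorial work has been discharged in the inductive argument of Theorem \ref{The:stuInvPDNet}. The only thing worth flagging is a small sanity check: the conclusion of Theorem \ref{The:stuInvPDNet} is stated for the \emph{final} slice produced by both Algorithm \ref{Alg:Slicing} and the post-process Algorithm \ref{Alg:Post}, while the corollary as stated only mentions Algorithm \ref{Alg:Slicing}. I would therefore include a single sentence noting that the post-process only adds execution arcs required to make $N'$ executable without altering the places in $\mathit{Crit}$, their markings, or the set of $\psi$-non-stuttering transitions, so the stuttering equivalence established in Theorem \ref{The:stuInvPDNet} still applies to the $N'$ named in this statement. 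With that caveat in place, the proof is a two-line chain of theorem applications.
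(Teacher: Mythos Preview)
Your proposal is correct and follows essentially the same two-line approach as the paper: invoke Theorem~\ref{The:stuInvPDNet} to get $N \stackrel{st_\psi}{\rightsquigarrow} N'$, then apply Theorem~\ref{The:stuInv} to conclude $N\models\psi\Leftrightarrow N'\models\psi$. Your extra remark about the post-process Algorithm~\ref{Alg:Post} is a reasonable sanity check that the paper itself glosses over.
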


\begin{proof}
It is obvious that $N\stackrel{st_\psi}{\rightsquigarrow}N'$ proved by Theorem \ref{The:stuInvPDNet}. Thus, $N\models\psi\Leftrightarrow N'\models\psi$ iff $N\stackrel{st_\psi}{\rightsquigarrow}N'$ according to Theorem \ref{The:stuInv}. Therefore, this corollary holds.
\end{proof}

Hence, it is concluded that the PDNet slice obtained by our methods is correct based on Theorem \ref{Cor:sas}.

\bio{}
\endbio

\bio{}
\endbio

\end{document}